\newtheorem{thm}{Theorem}[section]
\newtheorem{prop}[thm]{Proposition}
\newtheorem{cor}[thm]{Corollary}
\newtheorem{lem}[thm]{Lemma}
\newtheorem{fact}[thm]{Fact}
\theoremstyle{definition} \newtheorem{defn}[thm]{Definition}}
\newcommand{\Loop}{\text{Loop}}
\newcommand{\skel}{\text{skel}}
\newcommand{\Bary}{\text{Bary}}
\newcommand{\ld}{\lambda}
\newcommand{\K}{\mathcal{K}}
\newcommand{\Ll}{\mathcal{L}}
\newcommand{\N}{\mathbb{N}}
\newcommand{\Prt}{\mathcal{P}}
\newcommand{\I}{\mathcal{I}}
\newcommand{\Oh}{\mathcal{O}}
\newcommand{\dl}{\delta}
\newcommand{\id}{\text{id}}
\newcommand{\C}{\mathcal{C}}
\newcommand{\Z}{\mathbb{Z}}
\newcommand{\Ob}{\text{Ob}}
\newcommand{\Hom}{\text{Hom}}
\newcommand{\A}{\mathcal{A}}
\newcommand{\B}{\mathcal{B}}
\begin{document}

\title{The Relative Power of Composite Loop Agreement Tasks}

\author{Maurice Herlihy and Vikram Saraph}

\maketitle

\begin{abstract}

Loop agreement is a family of wait-free tasks that includes set agreement and simplex agreement, and was used to prove the undecidability of wait-free solvability of distributed tasks by read/write memory. Herlihy and Rajsbaum defined the algebraic signature of a loop agreement task, which consists of a group and a distinguished element. They used the algebraic signature to characterize the relative power of loop agreement tasks. In particular, they showed that one task implements another exactly when there is a homomorphism between their respective signatures sending one distinguished element to the other. In this paper, we extend the previous result by defining the \emph{composition} of multiple loop agreement tasks to create a new one with the same combined power. We generalize the original algebraic characterization of relative power to compositions of tasks. In this way, we can think of loop agreement tasks in terms of their basic building blocks. We also investigate a category-theoretic perspective of loop agreement by defining a category of loops, showing that the algebraic signature is a functor, and proving that our definition of task composition is the ``correct" one, in a categorical sense.

\end{abstract}


\section{Introduction}

A \emph{task} is a distributed problem in which each process begins with an input, communicates with others, and returns an output according to the task's specification. Common examples of tasks include consensus \cite{Consensus}, set agreement \cite{SetAgree}, and renaming \cite{Rename}. \emph{Protocols} are distributed programs that solve tasks. A protocol is \emph{wait-free} if every non-faulty process running the protocol eventually finishes execution, regardless of other process failures. One task \emph{implements} another if a protocol for the first task can be modified in a simple way to solve the second task.

\emph{Loop agreement} is a family of tasks that models the convergence of processes along a distinguished loop of a given space, and includes simplex agreement and set agreement. One application of loop agreement is a simple proof of the undecidability of solvability of distributed tasks by read/write memory \cite{Loop1}. Herlihy and Rajsbaum defined the \emph{algebraic signature} of a loop agreement task in terms of a group $G$ and an element $g \in G$. They proved that the algebraic signature completely characterizes the relative power of loop agreement tasks \cite{Loop2}, in the following sense. If tasks $T_1$ and $T_2$ have signatures $(G_1, g_1)$ and $(G_2, g_2)$, respectively, then $T_1$ implements $T_2$ exactly when there is a group homomorphism $\phi : G_1 \rightarrow G_2$ mapping $g_1$ to $g_2$. Thus the operational problem of loop agreement tasks implementing one another is reduced to an algebraic characterization.

In this paper, we describe how several loop agreement tasks can implement others, define compositions of loop agreement tasks, and extend the aforementioned algebraic characterization to these compositions. Roughly speaking, the \emph{composition} of $n$ loop agreement tasks is a task in which each process solves each of the $n$ tasks in parallel. We show that tasks $\{T_i\}$ with signatures $\{(G_i, g_i)\}$ solve $T$ with signature $(G, g)$ if and only if there is a homomorphism $\phi : G_1 \times \cdots \times G_n \rightarrow G$ mapping $(g_1, \ldots, g_n)$ to $g$. We also provide a means of replacing the loop agreement tasks $\{T_i\}$ with an equivalent task $\prod T_i$, called the \emph{composition} of the $\{T_i\}$. This composition of tasks is also a loop agreement task, and has relative power equivalent to that of all the $\{T_i\}$. That is, the $\{T_i\}$ implement $\prod T_i$ and $\prod T_i$ implements each $T_i$.

Finally, we take a category-theoretic approach to loop agreement in order to show that we have the correct notion of task composition. We define a category of loop agreement tasks, \textbf{Loop}, and show that the map assigning tasks to algebraic signatures is a functor into the category of pointed groups, \textbf{pGrp}. We also show that composition of loop agreement tasks is the categorical product in \textbf{Loop}, which strongly suggests that composition of tasks as defined in this paper correctly captures the operational meaning of parallel composition. We believe this category-theoretic approach may inspire future work on parallel composition of more general tasks beyond loop agreement, and may also inspire other work on applying category theory to general tasks.

Section 2 describes related work. Section 3 is a whirlwind tour of distributed tasks, algebraic topology, and loop agreement. Section 4 defines multiple implementation and composition of tasks and proves the main theorem. Section 5 provides an informal introduction to category theory and describes the category-theoretic view of loop agreement. Section 6 presents simple applications of our results, and in Section 7 we conclude with ideas for possible future work.

\section{Related Work}

Herlihy and Shavit introduced the use of algebraic topology \cite{Hmlgy1, Hmlgy2}, and in particular, homology theory to prove various impossibility results pertaining to set agreement and renaming. Since then homology theory has been used to prove other impossibility results in distributed computing \cite{OtherHm1, OtherHm2}. Gafni and Koutsoupias were the first to use the fundamental group in understanding distributed tasks \cite{Undecide} by showing the undecidability of wait-free solvability of certain tasks. Herlihy and Rajsbaum obtained similar undecidability results in other models which include loop agreement \cite{Loop1}, and also characterized the relative power of loop agreement tasks via their algebraic signatures \cite{Loop2}.

Loop agreement has also been generalized to higher dimensions. Liu, Xu, and Pan define \emph{$n$-rendezvous tasks} \cite{Rendez}, where processes begin on distinguished vertices of an embedded $(n-1)$-sphere of an $n$-dimensional complex, and converge on a simplex of the embedded sphere. They generalize the algebraic signature characterization to a subclass of rendezvous tasks called \emph{nice} rendezvous tasks, which are tasks whose output complexes have trivial homology groups below and above dimension $n$, and a free Abelian $n$-th homology group. The authors apply their main result to show there are countably infinite inequivalent nice rendezvous tasks.

Liu, Pu, and Pan explore a lower-dimensional variant of loop agreement called \emph{degenerate loop agreement} \cite{Degen}, which unlike loop agreement includes binary consensus. Processes begin on a $1$-dimensional complex, or a graph, and must converge to one of two possible starting locations in the graph. The authors prove that there are only two inequivalent tasks degenerate tasks: the trivial task and binary consensus. 

\section{Background}

In the first subsection, we describe the mathematical model used for distributed tasks, of which more details can be found in Herlihy, Kozlov, and Rajsbaum \cite{DCCT}. In the second subsection, we summarize important definitions and results from algebraic topology. 

\subsection{Distributed Computing}

Formally, a (colorless) \emph{task} is a triple $(\I, \Oh, \Gamma)$, where objects $\I$ and $\Oh$, called the \emph{input} and \emph{output complexes} of the task, are mathematical structures known as simplicial complexes. A \emph{simplicial complex} on a set $V$ is a collection of subsets $\C$ of $V$ such that $\C$ is downward closed under the subset relation. Complexes can be thought of as higher-dimensional graphs where ``edges" may ``connect" more than two vertices. In the context of tasks, vertices of $\I$ represent process input values, while simplexes of $\I$ represent valid input combinations. Likewise, vertices of $\Oh$ represent process output (or decision) values, and simplexes represent valid output combinations. Relating $\I$ and $\Oh$ is the map $\Gamma : \I \rightarrow 2^\Oh$, which is called the task's \emph{specification map}, and carries simplexes of $\I$ to subcomplexes of $\Oh$ in a monotonic way\footnote{In general, if $\A$ and $\B$ are simplicial complexes, then a function $\Phi : \A \rightarrow 2^\B$ is called a \emph{carrier map} if for each $\sigma \subseteq \tau \in \A$, $\Phi(\sigma)$ is a simplicial complex, and $\Phi(\sigma) \subseteq \Phi(\tau)$ (or $\Phi$ is \emph{monotonic}).}. The map $\Gamma$ associates each input combination with a set of legal output combinations.

Protocols are objects that solve tasks, and are also modeled by triples $(\I, \Prt, \Xi)$. As with tasks, $\I$ is the protocol's \emph{input complex}. The object $\Prt$ is also a simplicial complex, which is called the \emph{protocol complex}, and is similar to a task's output complex, but has a slightly different meaning. Rather than a final decision value, a vertex in $\Prt$ represents a process's uninterpreted state (or view) after running the protocol. The map $\Xi : \I \rightarrow 2^\Prt$, called the \emph{execution map}, is monotonic, and represents the possible states in which processes may result after running the protocol.

A \emph{simplicial map} $\dl : \I \rightarrow \Oh$ between two complexes is a vertex map that send simplexes to simplexes; that is, $\dl(\sigma) \in \Oh$ for each $\sigma \in \I$. A protocol $(\I, \Prt, \Xi)$ \emph{solves} $(\I, \Oh, \Gamma)$ if there exists a simplicial map $\dl : \Oh \rightarrow \Prt$, called a \emph{decision map}, that respects the task specification $\Gamma$. Formally, $\dl$ respects $\Gamma$ if for each simplex $\sigma \in \I$, we have $(\dl \circ \Xi)(\sigma) \subseteq \Gamma(\sigma)$.

Some tasks are inherently harder than others, and sometimes we can transform a protocol for one task into a protocol for another. We say task $T_1$ \emph{implements} $T_2$ if we can use the output complex of $T_1$ (or a subdivision of it) as a protocol complex for solving $T_2$. Mathematically speaking, if $T_1 = (\I, \Oh_1, \Gamma_1)$ and $T_2 = (\I, \Oh_2, \Gamma_2)$, then $T_1$ implements $T_2$ if there exists an natural number $N$ and a simplicial map $\phi : \Bary^N(\Oh_1) \rightarrow \Oh_2$ such that $(\phi \circ \Bary^N \circ \Gamma_1)(\sigma) \subseteq \Gamma_2(\sigma)$ for each $\sigma \in \I$. The barycentric subdivision operator $\Bary$ is a topological operator (see the next section) that models read/write memory. Two tasks are \emph{equivalent} if they implement each other.

\subsection{Algebraic Topology}

Before we can define loop agreement, we must briefly introduce the relevant machinery from algebraic topology. We assume a basic understanding of point-set topology. The algebraic topology used is at the undergraduate level, of which a formal treatment can be found in Hatcher \cite{Hatcher}. We begin with the formal definition of a simplicial complex.

\subsubsection{Simplicial Complexes}

\begin{defn}
Let $V$ be any set, whose elements are called \emph{vertices}. A \emph{simplicial complex} (over $V$) is a set of subsets $\C$ of $V$ such that for each set $\tau \in \C$, if $\sigma \subseteq \tau$, then $\sigma \in \C$. That is, $\C$ is downward closed under taking subsets. Elements of $\C$ are called \emph{simplexes}.
\end{defn}

We can think of simplicial complexes as a generalization of graphs, where simplexes may be incident to more than two vertices. Graphs are then precisely the simplicial complexes whose simplexes contain at most two vertices. Nontrivial graphs have dimension $1$, and in general, the \emph{dimension} of a complex $\C$ is $n - 1$, where $n$ is the size of the largest simplex in $\C$. The dimension of a simplex $\sigma$ is simply $|\sigma| - 1$. The \emph{standard $n$-simplex}, $\Delta^n$, is the simplicial complex on $n + 1$ vertices containing all possible simplexes. By convention, we will use $\{0, \ldots, n\}$ for the vertex set of $\Delta^n$.

A \emph{subcomplex} of $\C$ is a subset $\B \subseteq \C$ that is also a simplicial complex. For each nonnegative integer $k$, the \emph{$k$-skeleton} of $\C$, denoted $\skel^k(\C)$, is the subcomplex of $\C$ containing all simplexes of dimension at most $k$.

The above formulation of simplicial complexes defines them in a purely combinatorial way, but complexes can also be realized as topological spaces. Notationally, if $\C$ is a complex, then its geometric realization is denoted by $|\C|$. As previously mentioned, the \emph{barycentric subdivision} is an operator that models read/write memory, and is better understood geometrically than combinatorially. Given a geometric simplicial complex $|\C|$, we can create another geometric simplicial complex by adding new vertices to the barycenter of each simplex, and adding new simplexes accordingly. This gives rise to an abstract simplicial complex, denoted $\Bary(\C)$. Notice that the barycentric subdivision does not change the geometric realization of the original complex; that is, $|\Bary(\C)| = |\C|$.

The barycentric subdivision is also an important tool in approximating continuous functions with simplicial maps. If $f : |\A| \rightarrow |\B|$ is a continuous function between complexes, then a simplicial map $\phi : \A \rightarrow \B$ is called a \emph{simplicial approximation} of $f$ if for every $p \in |\A|$, $|\phi|(p)$ is contained in the smallest simplex containing $f(p)$. Using the barycentric subdivision, we can construct a simplicial approximation of any continuous function, as stated below.

\begin{fact}[Simplicial Approximation]

Let $f : |\A| \rightarrow |\B|$ be a continuous function between simplicial complexes. Then there exists an $N \in \N$ and a simplicial map $\phi : \Bary^N(\A) \rightarrow \B$ that is a simplicial approximation of $f$.

\end{fact}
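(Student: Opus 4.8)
The statement to prove is the Simplicial Approximation Theorem.

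The plan is to reduce the statement to the classical Simplicial Approximation Theorem for simplicial complexes, whose proof hinges on the notion of stars of vertices and the Lebesgue number lemma. First I would recall that for a vertex $v$ in a simplicial complex $\K$, the \emph{open star} $\mathrm{st}(v)$ is the union of the interiors of all simplexes of $|\K|$ having $v$ as a vertex; these open stars form an open cover of $|\K|$, and a collection of vertices $v_0, \ldots, v_k$ spans a simplex of $\K$ exactly when $\bigcap_i \mathrm{st}(v_i) \neq \emptyset$. The key lemma (the ``star condition'') is that a vertex map $\phi : \A \to \B$ extends to a simplicial approximation of $f : |\A| \to |\B|$ if and only if $f(\mathrm{st}(a)) \subseteq \mathrm{st}(\phi(a))$ for every vertex $a$ of $\A$.

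The main steps would be: (1) Pull back the open cover $\{\mathrm{st}(b) : b \in \B^{(0)}\}$ of $|\B|$ along the continuous map $f$ to get an open cover $\{f^{-1}(\mathrm{st}(b))\}$ of $|\A|$. (2) Since $|\A|$ is a compact metric space (here one uses that the complexes are finite, or at least that $|\A|$ is compact — I would note this hypothesis, as it is where finiteness or local finiteness enters), invoke the Lebesgue number lemma to obtain $\lambda > 0$ such that every subset of $|\A|$ of diameter less than $\lambda$ lies in some $f^{-1}(\mathrm{st}(b))$. (3) Observe that iterated barycentric subdivision shrinks the mesh (the maximal diameter of a simplex): $\mathrm{mesh}(\Bary^N(\A)) \to 0$ as $N \to \infty$, so choose $N$ large enough that every simplex of $\Bary^N(\A)$ — and hence every open star of a vertex of $\Bary^N(\A)$ — has diameter less than $\lambda/2$, so each open star maps under $f$ into some $\mathrm{st}(b)$. (4) For each vertex $a$ of $\Bary^N(\A)$, pick such a $b$ and set $\phi(a) = b$; verify the star condition is satisfied, hence whenever $a_0, \ldots, a_k$ span a simplex of $\Bary^N(\A)$, the intersection $\bigcap_i \mathrm{st}(a_i)$ is nonempty, so $f$ maps a common point into $\bigcap_i \mathrm{st}(\phi(a_i))$, forcing $\phi(a_0), \ldots, \phi(a_k)$ to span a simplex of $\B$; thus $\phi$ is simplicial. (5) Conclude from the star condition that $\phi$ is a simplicial approximation of $f$, i.e. $|\phi|(p)$ lies in the carrier (smallest simplex containing) $f(p)$ for all $p$.

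The main obstacle — really the only subtle point — is step (3), quantifying how fast barycentric subdivision shrinks mesh: one needs the standard estimate that a single barycentric subdivision of an $n$-dimensional complex multiplies the mesh by at most $\frac{n}{n+1}$, so that $\mathrm{mesh}(\Bary^N(\A)) \leq \bigl(\tfrac{n}{n+1}\bigr)^N \mathrm{diam}$, which tends to $0$. This requires the dimension of $\A$ to be finite, and combined with compactness of $|\A|$ it is exactly the hypothesis under which the theorem holds; I would either assume the complexes in play are finite-dimensional (which is automatic for the finite complexes arising from tasks) or state it explicitly. Everything else is a routine unwinding of the star condition, so I would present steps (1), (2), (4), (5) briskly and spend the bulk of the argument on the mesh-shrinking estimate in step (3).
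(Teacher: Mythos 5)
The paper states this as a background Fact without proof, deferring to Hatcher, and your argument is precisely the classical proof found there: pull back the open-star cover along $f$, apply the Lebesgue number lemma to the compact metric space $|\A|$, use the mesh-shrinking estimate $\mathrm{mesh}(\Bary(\A)) \le \tfrac{n}{n+1}\,\mathrm{mesh}(\A)$ to choose $N$, and verify the star condition. Your sketch is correct (the finiteness caveat you flag is harmless here, since all complexes in the paper are finite), so there is nothing to compare against beyond noting agreement with the standard reference.
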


We can take products of simplicial complexes. The product of two complexes is another complex that combines the structures of the original two.

\begin{defn}
\label{catprod}

Let $\C_1$ and $\C_2$ be simplicial complexes, and let $V(\C_1)$ and $V(\C_2)$ be their vertex sets, respectively. Then the \emph{(categorical) product of simplicial complexes} is a complex $\C_1 \times \C_2$ with vertex set $V(\C_1) \times V(\C_2)$. A subset $\sigma$ of $V(\C_1) \times V(\C_2)$ is a simplex in $\C_1 \times \C_2$ if and only if $\rho_1(\sigma)$ and $\rho_1(\sigma)$ are simplexes in $\C_1$ and $\C_2$, where $\rho_1$ and $\rho_2$ are projections onto the first and second coordinates, respectively.

\end{defn}

Intuitively, the product of complexes is a way of combining two complexes in the ``best possible way," and operationally, the product captures all possible combinations of process views if two tasks are solved in parallel. It is an important technical point that the product of complexes and product of topological spaces are not the same; it is not true that $|\A| \times |\B|$ and $|\A \times \B|$ are homeomorphic. They are, however, ``homotopy equivalent," which is a type of equivalence described in the next section.

To each topological space we can assign an invariant called the fundamental group, a basic construct taken from algebraic topology. The fundamental group is used to define the algebraic signature of a loop agreement task.

\subsubsection{Homotopy and the Fundamental Group}

Given a topological space $X$ and a basepoint $x_0 \in X$, a \emph{loop} in $X$ based at $x_0$ is a continuous function $\lambda : [0, 1] \rightarrow X$ such that $\lambda(0) = \lambda(1) = x_0$. Two loops $\lambda_1$ and $\lambda_2$ based $x_0$ are (loop) \emph{homotopic} if one loop can be continuously deformed to the other. More precisely, $\lambda_1$ and $\lambda_2$ are homotopic if there is a continuous function $H : [0, 1] \times [0, 1] \rightarrow X$ such that $H(0, -) = \lambda_1$, $H(1, -) = \lambda_2$, and $H(-, 0) = H(-, 1) = x_0$. Homotopy is an equivalence relation. We write $[\lambda]$ to denote the equivalence class of all loops homotopic to $\lambda$.

Let $\alpha : [0, 1] \rightarrow X$ and $\beta : [0, 1] \rightarrow X$ be two loops based at $x_0$. Then we can \emph{concatenate} $\alpha$ and $\beta$ to get another loop, $\alpha \cdot \beta$, defined by traversing $\alpha$, returning to $x_0$, and then traversing $\beta$. The loop $\alpha \cdot \beta : [0, 1] \rightarrow X$, also based at $x_0$, is defined as

\[(\alpha \cdot \beta)(t) = \left\{
     \begin{array}{ll}
        \alpha(2t) & \text{for } 0 \le t \le \frac{1}{2} \\
        \beta(2t - 1) & \text{for } \frac{1}{2} \le t \le 1 \\
     \end{array}
   \right. \]

Concatenation behaves well with homotopy. If $\alpha$ and $\beta$ are homotopic to $\alpha'$ and $\beta'$, respectively, then $[\alpha \cdot \beta] = [\alpha' \cdot \beta']$. From this it follows that concatenation is associative on classes of loops based at $x_0$. In fact, concatenation is a group operation on classes of loops based at $x_0$, with the inverse computed by traversing a loop in the opposite direction, and the identity element being the class of all loops homotopic to the constant loop at $x_0$. Formally, the inverse of $[\alpha]$ is the class of the loop $\alpha^{-1}(t) = \alpha(1 - t)$, and the class $[e]$ of loop $e(t) = x_0$ serves as the identity.

\begin{defn}

Let $X$ be a topological space, and let $x_0 \in X$ be a basepoint. Then the \emph{fundamental group} of $X$ at $x_0$, denoted $\pi_1(X, x_0)$, is the set of all loop homotopy classes with concatenation as its group operation. If $X$ is path-connected, then $\pi_1(X, x_0)$ is independent of $x_0$, and we simply write $\pi_1(X)$.

If $f : (X, x_0) \rightarrow (Y, y_0)$ is a basepoint-preserving continuous function, then $\pi_1$ also induces a group homomorphism $f_* : \pi(X, x_0) \rightarrow (Y, y_0)$ called the \emph{induced homomorphism}, defined by $f_*([\lambda]) = [f \circ \lambda]$.

\end{defn}

Henceforth, we assume all topological spaces and simplicial complexes under consideration are path-connected. For brevity, if $\C$ is a complex, we write $\pi_1(\C)$ instead of $\pi_1(|\C|)$. An important property of the fundamental group is how it behaves with the product of topological spaces.

\begin{fact} Let $X$ and $Y$ be topological spaces. Then $\pi_1(X \times Y) \cong \pi_1(X) \times \pi_1(Y)$.

\end{fact}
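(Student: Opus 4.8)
The plan is to construct an explicit isomorphism using the universal property of the product in the category of topological spaces, and then verify it is a group homomorphism and a bijection. Write $p_X : X \times Y \rightarrow X$ and $p_Y : X \times Y \rightarrow Y$ for the two projections and fix a basepoint $(x_0, y_0)$. Define
\[
\Phi : \pi_1(X \times Y, (x_0,y_0)) \rightarrow \pi_1(X, x_0) \times \pi_1(Y, y_0), \qquad \Phi([\lambda]) = \left( (p_X)_*[\lambda], \, (p_Y)_*[\lambda] \right) = \left( [p_X \circ \lambda], \, [p_Y \circ \lambda] \right).
\]
Since $(p_X)_*$ and $(p_Y)_*$ are induced homomorphisms, $\Phi$ is well defined on homotopy classes, and it is itself a homomorphism because a map into a product group is a homomorphism precisely when each of its two components is.

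Next I would prove surjectivity. Given a loop $\alpha$ in $X$ based at $x_0$ and a loop $\beta$ in $Y$ based at $y_0$, the function $\lambda(t) = (\alpha(t), \beta(t))$ is continuous because a map into $X \times Y$ is continuous exactly when both of its coordinate functions are, and it is a loop based at $(x_0, y_0)$; by construction $\Phi([\lambda]) = ([\alpha], [\beta])$. For injectivity, suppose $\Phi([\lambda_1]) = \Phi([\lambda_2])$, so there is a basepoint-preserving homotopy $H_X$ from $p_X \circ \lambda_1$ to $p_X \circ \lambda_2$ and a basepoint-preserving homotopy $H_Y$ from $p_Y \circ \lambda_1$ to $p_Y \circ \lambda_2$. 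Then $H(s,t) = (H_X(s,t), H_Y(s,t))$ is continuous by the same coordinatewise criterion, and one checks directly that $H(0,-) = \lambda_1$, $H(1,-) = \lambda_2$, and $H(-,0) = H(-,1) = (x_0,y_0)$, so $[\lambda_1] = [\lambda_2]$ and $\Phi$ is injective.

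There is no serious obstacle here; the one technical point used repeatedly is that a function valued in $X \times Y$ is continuous if and only if both of its compositions with the projections are, which is exactly what permits assembling and disassembling loops and homotopies coordinate by coordinate. If one prefers to avoid explicit formulas, the whole argument can be phrased through the universal property alone: applying the natural bijection between continuous maps $Z \rightarrow X \times Y$ and pairs of continuous maps $Z \rightarrow X$, $Z \rightarrow Y$ with $Z = [0,1]$ yields well-definedness and surjectivity, and with $Z = [0,1] \times [0,1]$ yields injectivity. Path-connectedness is not required for the statement, although it justifies suppressing basepoints in accordance with our standing convention.
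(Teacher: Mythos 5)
Your proof is correct and complete: it is the standard argument (essentially Hatcher's Proposition 1.12), building the isomorphism from the two projection-induced homomorphisms and using the coordinatewise criterion for continuity of maps into a product to get surjectivity and injectivity. The paper states this result as a background Fact without proof, so there is nothing to compare against; your argument, including the closing remark that path-connectedness only matters for suppressing basepoints, is exactly the canonical justification.
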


Homotopy is defined for loops, but it is more generally defined for continuous functions where the domain may not be $[0, 1]$. Two continuous functions $f, g : X \rightarrow Y$ are \emph{homotopic} if there is a continuous $H : X \times [0, 1] \rightarrow Y$ such that $H(-, 0) = f$ and $H(-, 1) = g$. We write $f \simeq g$ if this is the case. If in addition $X \subseteq Y$ and $H$ fixes $X$, then $H$ is called a \emph{deformation retraction} and we say $Y$ \emph{deformation retracts} onto $X$. If $\delta$ is a simplicial approximation of a continuous function $h$, then it is known that $|\delta| \simeq h$.

Using homotopy, we can define a weak equivalence between topological spaces called homotopy equivalence.

\begin{defn}

Let $X$ and $Y$ be topological spaces. Then $X$ and $Y$ are \emph{homotopy equivalent}, or $X \simeq Y$, if there are continuous functions $f : X \rightarrow Y$ and $g : Y \rightarrow X$ such that $g \circ f \simeq \id_X$ and $f \circ g \simeq \id_Y$. The maps $f$ and $g$ are called \emph{homotopy equivalences} and are \emph{homotopy inverses} of one another.

\end{defn}

Homeomorphic spaces are clearly homotopy equivalent. Homotopy equivalent spaces have the same fundamental group.

\begin{fact}

Let $X$ and $Y$ be topological spaces. If $X \simeq Y$, then $\pi_1(X) \cong \pi_1(Y)$.

\end{fact}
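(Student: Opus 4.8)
\section*{Proof proposal}

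The plan is to combine functoriality of $\pi_1$ with a change-of-basepoint argument. Fix a basepoint $x_0 \in X$, let $y_0 = f(x_0)$, and let $x_1 = g(y_0) = (g \circ f)(x_0)$. The induced homomorphisms give $f_* : \pi_1(X, x_0) \to \pi_1(Y, y_0)$ and $g_* : \pi_1(Y, y_0) \to \pi_1(X, x_1)$. The only real subtlety is that $g \circ f$ is merely homotopic to $\id_X$, not equal to it, so $x_1$ need not equal $x_0$ and one cannot directly assert $g_* \circ f_* = \id$.

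To fix this I would first prove the standard basepoint-change lemma: if $\psi_0 \simeq \psi_1 : X \to Y$ via a homotopy $H$ with $H(-,0) = \psi_0$ and $H(-,1) = \psi_1$, then the path $h(t) = H(x_0, t)$ from $\psi_0(x_0)$ to $\psi_1(x_0)$ induces an isomorphism $\beta_h : \pi_1(Y, \psi_1(x_0)) \to \pi_1(Y, \psi_0(x_0))$, $[\gamma] \mapsto [h \cdot \gamma \cdot h^{-1}]$, satisfying $(\psi_0)_* = \beta_h \circ (\psi_1)_*$. The proof is a routine homotopy construction: for a loop $\lambda$ based at $x_0$, the composite $H \circ (\lambda \times \id_{[0,1]})$, suitably reparametrized, is a homotopy rel endpoints from $\psi_0 \circ \lambda$ to $h \cdot (\psi_1 \circ \lambda) \cdot h^{-1}$. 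I would also recall that $\beta_h$ is always an isomorphism (with inverse $\beta_{h^{-1}}$), that $(\id)_* = \id$, and that $(v \circ u)_* = v_* \circ u_*$ whenever the composite is defined.

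Applying the lemma to $g \circ f \simeq \id_X$ yields a path $p$ in $X$ with $g_* \circ f_* = (g \circ f)_* = \beta_p \circ (\id_X)_* = \beta_p$, an isomorphism $\pi_1(X, x_0) \to \pi_1(X, x_1)$; in particular $f_*$ is injective and $g_* : \pi_1(Y, y_0) \to \pi_1(X, x_1)$ is surjective. Applying it to $f \circ g \simeq \id_Y$ at the basepoint $y_0$ gives, with $g_*$ starting at $y_0$ and $f_*$ at $x_1$, that $f_* \circ g_* = (f \circ g)_*$ is an isomorphism $\pi_1(Y, y_0) \to \pi_1(Y, (f \circ g)(y_0))$; in particular $g_* : \pi_1(Y, y_0) \to \pi_1(X, x_1)$ is injective. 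Hence $g_*$ is an isomorphism, and then $f_* = g_*^{-1} \circ (g_* \circ f_*)$ is a composite of isomorphisms, so $f_* : \pi_1(X, x_0) \to \pi_1(Y, y_0)$ is an isomorphism. Since $X$ and $Y$ are path-connected, the fundamental group is independent of basepoint, and we conclude $\pi_1(X) \cong \pi_1(Y)$.

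The main obstacle is precisely the basepoint bookkeeping described above: the entire argument rests on the change-of-basepoint lemma, without which the homotopies $g \circ f \simeq \id_X$ and $f \circ g \simeq \id_Y$ cannot be turned into algebraic identities. Once that lemma is in hand, the rest is a short diagram chase.
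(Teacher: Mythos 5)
Your proof is correct. The paper states this as a background fact without proof (it is the standard result from Hatcher, Proposition 1.18), and your argument --- the change-of-basepoint lemma $(\psi_0)_* = \beta_h \circ (\psi_1)_*$ applied to the homotopies $g \circ f \simeq \id_X$ and $f \circ g \simeq \id_Y$, followed by the injectivity/surjectivity bookkeeping to conclude $g_*$ and hence $f_*$ are isomorphisms --- is exactly the standard proof, with the basepoints tracked correctly.
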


The next few facts are specifically about simplicial complexes. Recall that given two simplicial complexes $\A$ and $\B$, $|\A| \times |\B|$ and $|\A \times \B|$ are not topologically equivalent, though they are homotopy equivalent. See Kozlov's book on combinatorial algebraic topology for a detailed proof of this result \cite{CAT}.

\begin{fact}
\label{prodequiv}

Let $\A$ and $\B$ be simplicial complexes. Then $|\A| \times |\B| \simeq |\A \times \B|$.

\end{fact}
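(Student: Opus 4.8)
The plan is to construct an explicit continuous map from $|\A \times \B|$ to $|\A| \times |\B|$ and prove it is a homotopy equivalence. The two projections $\rho_1 : \A \times \B \to \A$ and $\rho_2 : \A \times \B \to \B$ are simplicial maps directly from Definition~\ref{catprod}, so they realize to continuous maps $|\rho_1|$ and $|\rho_2|$, and by the universal property of the product topology these assemble into a single map $p = (|\rho_1|, |\rho_2|) : |\A \times \B| \to |\A| \times |\B|$. Everything then reduces to showing $p$ is a homotopy equivalence. Since the complexes attached to loop agreement tasks are finite, I would assume throughout that $\A$ and $\B$ are finite, which sidesteps the usual point-set pathologies of infinite products of CW complexes.

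The first step is the case of full simplices. On one side $|\Delta^m| \times |\Delta^n|$ is a product of contractible spaces, hence contractible. On the other side $\Delta^m \times \Delta^n$ is a cone with apex $(0,0)$: for any simplex $\eta$, the projections of $\eta \cup \{(0,0)\}$ are $\rho_1(\eta) \cup \{0\}$ and $\rho_2(\eta) \cup \{0\}$, which are simplices of $\Delta^m$ and $\Delta^n$ because those complexes are full, so $\eta \cup \{(0,0)\}$ is again a simplex; hence $|\Delta^m \times \Delta^n|$ is contractible too, and $p$, being any map between contractible spaces, is a homotopy equivalence. Exactly the same argument shows that for arbitrary simplices $\sigma \in \A$, $\tau \in \B$, the restriction of $p$ is a homotopy equivalence $|\sigma \times \tau| \to |\sigma| \times |\tau|$, and moreover $p^{-1}(|\sigma| \times |\tau|) = |\sigma \times \tau|$.

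The second step is to pass from simplices to the whole complexes by gluing. Both spaces are finite unions, $|\A| \times |\B| = \bigcup_{\sigma,\tau} |\sigma| \times |\tau|$ and $|\A \times \B| = \bigcup_{\sigma,\tau} |\sigma \times \tau|$, and the intersection of two of the product pieces is $|\sigma \cap \sigma'| \times |\tau \cap \tau'|$, again of the same product-of-simplices form; all the inclusions involved are inclusions of subcomplexes, hence cofibrations. I would order the pieces and attach them one at a time, getting at each stage a pushout square of cofibrations on whose three outer corners $p$ restricts to a homotopy equivalence by the first step, and then apply the gluing lemma for homotopy equivalences to conclude that $p$ restricts to a homotopy equivalence on the new union as well; after finitely many steps this gives that $p$ itself is a homotopy equivalence. (An alternative route is to check, via a Mayer--Vietoris/van Kampen induction over the same pieces, that $p$ is a weak homotopy equivalence and then invoke Whitehead's theorem, both spaces being finite CW complexes.)

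The hard part will be the bookkeeping in the gluing step rather than any single homotopy: to apply the gluing lemma one must make the homotopy inverses and the connecting homotopies on the pieces agree along their intersections, and one must order and cover $|\A \times \B|$ by the sets $|\sigma \times \tau|$ so that each newly attached piece meets the part assembled so far in a subcomplex to which the base case again applies. This is precisely the argument worked out carefully in Kozlov~\cite{CAT}, which I would cite for the fully detailed proof; the conceptual content --- the natural map $p$ and the cone/contractibility observations --- is elementary, and all the labor is in upgrading ``homotopy equivalence on each cell'' to ``homotopy equivalence globally.''
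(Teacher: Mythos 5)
The paper does not actually prove this Fact; it states it and defers entirely to Kozlov \cite{CAT}, so your proposal supplies strictly more detail than the source text, and your overall strategy is the standard one. The local claims check out: the projections realize to a continuous comparison map $p = (|\rho_1|, |\rho_2|)$; under Definition \ref{catprod} the product $\Delta^m \times \Delta^n$ is in fact the \emph{full} simplex on $(m+1)(n+1)$ vertices (every subset has simplex projections), so both sides of the base case are contractible; and the identity $p^{-1}(|\sigma| \times |\tau|) = |\sigma \times \tau|$ holds because a simplicial map sends the open cell of a simplex $\eta$ into the open cell of its image, so a point interior to $|\eta|$ lands in $|\sigma| \times |\tau|$ exactly when $\rho_1(\eta) \subseteq \sigma$ and $\rho_2(\eta) \subseteq \tau$. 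The one place where the sketch as written does not go through is the gluing induction: if you attach the pieces $|\sigma \times \tau|$ one pair at a time, the intersection of a new piece with the union already built is a \emph{union} of product pieces, so neither the base case nor the inductive hypothesis applies to the third corner of your pushout square. You flag this yourself, and it is fixable by restructuring the induction: write $\A = \A' \cup \bar\sigma$ for a maximal simplex $\sigma$, so that $|\A \times \B|$ is glued from $|\A' \times \B|$ and $|\bar\sigma \times \B|$ along $|\partial\sigma \times \B|$; the first and third corners are handled by induction on the number of simplices of $\A$, and for the second one checks that $\rho_2 : \bar\sigma \times \B \rightarrow \B$ and the section $v \mapsto (0, v)$ are contiguous (their union over any simplex still has simplex projections, since $\bar\sigma$ is full), so $|\bar\sigma \times \B| \simeq |\B| \simeq |\sigma| \times |\B|$ compatibly with $p$. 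With that repair, the gluing lemma for homotopy equivalences along cofibrations finishes the argument; deferring the remaining bookkeeping to Kozlov, as you do, is consistent with how the paper itself treats the statement.
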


It follows that $|\A| \times |\B|$ and $|\A \times \B|$ have the same fundamental group. This will allow us to pass between the categorical product of $\A$ and $\B$ and the topological product of $|\A|$ and $|\B|$. We will require one more fact relating the fundamental group and the $2$-skeleton.

\begin{fact}
\label{only2}

Let $\C$ be a complex. Then the inclusion $\iota: \skel^2(\C) \rightarrow \C$ induces an isomorphism on fundamental groups.

\end{fact}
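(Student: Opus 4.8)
This is the simplicial incarnation of the classical fact that the fundamental group of a CW complex is carried by its $2$-skeleton, and the plan is to prove it directly from the Simplicial Approximation theorem by showing separately that $\iota_*$ is surjective and injective. The one geometric observation doing all the work is that a simplicial map out of an $m$-dimensional simplicial complex has image contained in the $m$-skeleton of its target, since a simplex of dimension at most $m$ maps to a simplex of dimension at most $m$. Throughout I will take the basepoint $x_0$ to be a vertex of $\C$ — harmless since $|\C|$ is path-connected, $\pi_1$ is then basepoint-independent, and every vertex of $\C$ lies in $\skel^2(\C)$ — and I will use the standard refinement of the fact ``$|\delta| \simeq h$'': the straight-line homotopy realizing it moves each point only within the smallest simplex of the target containing its image under $h$, so it is stationary at any point $h$ sends to a vertex and never leaves a subcomplex of the target already containing the image of $h$.

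\textbf{Surjectivity.} Given a loop $\lambda : [0,1] \to |\C|$ based at $x_0$, view $[0,1]$ as $|\Delta^1|$ and apply Simplicial Approximation to obtain a simplicial $\phi : \Bary^N(\Delta^1) \to \C$ approximating $\lambda$. The endpoints $0,1$ are still vertices of $\Bary^N(\Delta^1)$ and $\lambda$ sends them to the vertex $x_0$, forcing $\phi(0)=\phi(1)=x_0$; hence $|\phi|$ is a loop at $x_0$ and the straight-line homotopy to $\lambda$ is stationary at the endpoints, so $[\lambda] = [\,|\phi|\,]$ in $\pi_1(|\C|)$. Since $\Bary^N(\Delta^1)$ is one-dimensional, $|\phi|$ has image in $\skel^1(\C) \subseteq \skel^2(\C)$, so $[\lambda] \in \mathrm{im}\,\iota_*$.

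\textbf{Injectivity.} Suppose $\lambda$ is a loop in $\skel^2(\C)$ that becomes null-homotopic in $|\C|$. First, running the surjectivity argument inside the complex $\skel^2(\C)$ itself, replace $\lambda$ by a loop homotopic to it within $\skel^2(\C)$ that is simplicial with respect to a subdivision of $\Delta^1$. Now a null-homotopy is a map $H : Q \to |\C|$ on the square $Q=[0,1]^2$ equal to $\lambda$ on the edge $\{s=0\}$ and constantly $x_0$ on the other three edges; triangulate $Q$ so that $H|_{\partial Q}$ is simplicial. Because $H(\partial Q) = \lambda([0,1]) \cup \{x_0\} \subseteq \skel^2(\C)$, Simplicial Approximation together with the tracking property yields a simplicial $\tilde H : \Bary^N(Q) \to \C$ whose restriction to $\partial Q$ stays in $\skel^2(\C)$, is constant at $x_0$ on the three edges that $H$ sent to $x_0$, and is homotopic rel endpoints within $\skel^2(\C)$ to $\lambda$ on the remaining edge. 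As $\Bary^N(Q)$ is two-dimensional, $|\tilde H|$ maps into $\skel^2(\C)$, so $|\tilde H|$ is itself a loop-homotopy \emph{in} $\skel^2(\C)$ exhibiting $|\tilde H|(0,-)$ as null-homotopic there; combined with $|\tilde H|(0,-) \simeq \lambda$ rel endpoints in $\skel^2(\C)$, this gives $[\lambda] = 1$ in $\pi_1(\skel^2(\C))$. Thus $\ker \iota_*$ is trivial, and with surjectivity, $\iota_*$ is an isomorphism.

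\textbf{Main obstacle.} No single step is deep; the delicate part is the bookkeeping around basepoints and the behaviour of simplicial approximations on the boundary of the square. A cleaner route, for anyone willing to establish it, is a relative Simplicial Approximation theorem — an approximation agreeing with the original map on a subcomplex where it is already simplicial — which pushes the null-homotopy into $\skel^2(\C)$ in one stroke. Either way, the mathematical content is just the dimension count that simplicial maps out of an $m$-complex land in the $m$-skeleton.
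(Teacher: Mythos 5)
Your proof is correct, but it takes a genuinely different route from the paper's. The paper does not argue this fact directly: it derives it from the Cellular Approximation theorem (Fact \ref{cellular}), quoted from Hatcher --- surjectivity because a loop is a map out of a $1$-dimensional object and is therefore homotopic to one landing in $\skel^1(\C)$, and injectivity because a null-homotopy is a map out of a $2$-dimensional object and the relative clause of cellular approximation pushes it into $\skel^2(\C)$ while fixing the boundary. You instead run the dimension count by hand through the Simplicial Approximation theorem, using the tracking property of the straight-line homotopy to control what happens on the boundary of the square. The two arguments share the same mathematical core (maps out of an $m$-dimensional complex can be homotoped into the $m$-skeleton), but yours is more self-contained: it needs only simplicial approximation, which the paper already states as a Fact, plus elementary bookkeeping, whereas the paper leans on the heavier relative cellular approximation theorem as a black box. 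The price is exactly the bookkeeping you flag: the injectivity step quietly requires that the approximation of the null-homotopy restrict on the $\lambda$-edge to a loop homotopic to $\lambda$ rel endpoints \emph{inside} $\skel^2(\C)$, which your tracking property does deliver, since the straight-line homotopy stays within the simplexes carrying the image of $\lambda$, all of which lie in $\skel^2(\C)$. Your suggested cleaner alternative --- a relative simplicial approximation theorem --- is essentially what the paper's appeal to the relative form of cellular approximation accomplishes in one stroke.
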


This fact can be derived from the following, more general result, which can be found in Hatcher \cite{Hatcher}. We call a continuous function $g : |\A| \rightarrow |\B|$ \emph{cellular} if $g$ maps skeleta to skeleta, or more precisely, if $g(|\skel^n(\A)|) \subseteq |\skel^n(\B)|$ for every $n$. Then every continuous $f : |\A| \rightarrow |\B|$ is homotopic to such a map $g$, as seen below.

\begin{fact}[Cellular Approximation]
\label{cellular}

Let $f : |\A| \rightarrow |\B|$ be a continuous function between simplicial complexes $\A$ and $\B$. Then $f$ is homotopic to a cellular function $g : |\A| \rightarrow |\B|$. Furthermore, if $\C \subseteq \A$ is a subcomplex such that $f$ is already cellular on $|\C|$, then we may require the homotopy between $f$ and $g$ to fix $|\C|$. 

\end{fact}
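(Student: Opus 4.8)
The plan is to prove this by induction on the skeleta of $\A$: at each stage one homotopes $f$ so that it carries one more skeleton of $\A$ into the corresponding skeleton of $\B$, and then one concatenates the resulting (possibly countably many) homotopies into a single homotopy from $f$ to a cellular map $g$.

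The technical core is a local push-off lemma: if $g : |\A| \to |\B|$ is continuous, $\sigma$ is an $n$-simplex of $\A$, and $g(|\sigma|)$ meets the interior of some simplex $e$ of $\B$ of dimension $k > n$, then $g$ is homotopic, by a homotopy that is stationary wherever $g$ already avoids the interior of $e$, to a map whose restriction to $|\sigma|$ meets strictly fewer simplices of $\B$. I would prove this by first applying simplicial approximation (the Simplicial Approximation fact above) to a suitable restriction of $g$, so that near $\sigma$ the map becomes piecewise linear; since a simplicial map sends an $m$-simplex to a simplex of dimension at most $m$, the piecewise-linear image of $|\sigma|$ is at most $n$-dimensional and hence cannot cover the $k$-simplex $e$, so $g(|\sigma|)$ misses some interior point $p$ of $e$. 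Then I would use the radial deformation retraction of $e \setminus \{p\}$ onto its boundary $\partial e \subseteq |\skel^{k-1}(\B)|$, extended by the identity on $|\B| \setminus e$, to obtain a homotopy of $g$ that pushes $g(|\sigma|)$ entirely out of the interior of $e$, leaving fixed everything that already avoided it. Iterating over the finitely many simplices of $\B$ of dimension exceeding $n$ that are met by $g(|\sigma|)$ (finitely many by compactness), one homotopes $g$ until $g(|\sigma|) \subseteq |\skel^n(\B)|$. I expect this lemma to be the main obstacle: the delicate points are making the radial homotopy globally well-defined and continuous on all of $|\A|$ (not just on $|\sigma|$), and keeping it relative to a prescribed subcomplex — which works because, when $f$ is already cellular on $\C$, the set $|\C|$ is automatically disjoint from the preimage of the interior of any simplex of dimension $> \dim \C$, so the modifications never touch $|\C|$.

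Granting the lemma, the induction proceeds as follows. Suppose $f$ has been homotoped so that $f(|\skel^{i}(\A)|) \subseteq |\skel^{i}(\B)|$ for every $i \le n-1$, the homotopies used so far having fixed $|\C|$. For each $n$-simplex $\sigma$ of $\A$ not lying in $\C$, apply the lemma to push $f(|\sigma|)$ into $|\skel^n(\B)|$; the resulting homotopies are stationary on $|\skel^{n-1}(\A)|$ (where $f$ is already correct) and on $|\C|$, hence agree on overlaps of closed simplices and glue to a single homotopy of $f$ fixing $|\C|$. After this stage $f$ is cellular on $|\skel^n(\A)|$. If $\dim \A < \infty$ the process terminates; in general, carry out the $n$-th stage during the parameter subinterval $[1 - 2^{-(n-1)}, 1 - 2^{-n}]$ of $[0,1]$, so that every point of $|\A|$, lying in some finite skeleton, is moved during only finitely many stages. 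The infinite concatenation is then continuous, including at parameter $1$, and defines a homotopy from $f$ to a cellular map $g$ that fixes $|\C|$ throughout. Finally, I would note that Fact~\ref{only2} follows by applying this statement to loops $S^1 \to |\C|$ (for surjectivity on $\pi_1$) and to null-homotopies $D^2 \to |\C|$ rel $\partial D^2$ (for injectivity), since in both cases the domain is at most two-dimensional.
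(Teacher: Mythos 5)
First, note that the paper does not prove this Fact at all: it is quoted from Hatcher, so the only meaningful comparison is with the standard textbook argument, whose outline (skeleton-by-skeleton induction, a local push-off lemma, gluing and concatenating the resulting homotopies) you have reproduced correctly. The overall architecture of your proposal is the right one.

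However, there is a genuine gap at the heart of your push-off lemma, namely in the sentence that concludes ``so $g(|\sigma|)$ misses some interior point $p$ of $e$.'' The fact that a simplicial approximation $\phi$ of $g$ satisfies $|\phi|(|\sigma|) \cap \mathrm{int}(|e|) \subsetneq \mathrm{int}(|e|)$ says nothing about the image of $g$ itself: $g$ restricted to $|\sigma|$ can perfectly well be a space-filling map onto $|e|$, and every continuous map has a simplicial approximation, so non-surjectivity of $|\phi|$ does not transfer back to $g$. Read charitably, you intend first to \emph{homotope} $g$ to a map that is piecewise linear near $\sigma$ and only then push off radially; but then that preliminary homotopy must itself be stationary on $\partial|\sigma|$, on $|\C|$, and on everything already known to be cellular, and the straight-line homotopy from $g$ to $|\phi|$ (which is what the Simplicial Approximation fact gives you) is not stationary on any of these sets. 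Producing a homotopy rel the already-correct part of the domain to a map that fails to be surjective onto each higher-dimensional simplex is precisely the technical core of cellular approximation (Hatcher's Lemma 4.10), and it is not a corollary of simplicial approximation as cited; it requires a separate argument (e.g.\ subdividing $|\sigma|$, replacing $g$ by a linear interpolation on the subpolyhedron mapping near a chosen interior point of $e$, and checking the modification misses that point outside the subpolyhedron). Two smaller omissions: the radial retraction $r_t$ of $|e|\setminus\{p\}$ onto $\partial|e|$ composed with $g$ is only controlled on $|\sigma|$ (other simplices of $\A$ may map onto $p$), so the stage-$n$ homotopy is a priori defined only on $|\skel^n(\A)|$ and must be extended to $|\A|$ by the homotopy extension property (the paper's Fact~\ref{HEP}), which you never invoke; and your concluding derivation of Fact~\ref{only2} needs cellular approximation for CW pairs such as $(D^2,\partial D^2)$, not just for simplicial complexes, though that is a routine strengthening.
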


Now suppose we have a homotopy on a subcomplex and we want to extend it to the entire simplicial complex. The next fact, also found in Hatcher \cite{Hatcher}, allows us to do this.

\begin{fact}[Homotopy Extension]
\label{HEP}

Let $\C \subseteq \A$ and $\B$ be simplicial complexes, and let $F : |\A| \rightarrow |\B|$ be a continuous function. Suppose we have a homotopy $H : |\C| \times [0, 1] \rightarrow |\B|$ such that $H(-, 0) = F|_{|\C|}$. Then there is a homotopy extending $H$ to all of $|\A|$, respecting $F$. That is, we can find homotopy $H' : |\A| \times [0, 1] \rightarrow |\B|$ such that $H' | _{|\C| \times [0, 1]} = H$ and $H'(-, 0) = F$.

\end{fact}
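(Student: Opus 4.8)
The plan is to reduce the statement to the existence of a \emph{retraction} and then build that retraction one simplex at a time. First I would invoke the standard reduction: the pair $(|\A|, |\C|)$ has the homotopy extension property with respect to the target $|\B|$ as soon as the subspace $Y_0 := \bigl(|\A| \times \{0\}\bigr) \cup \bigl(|\C| \times [0,1]\bigr)$ is a retract of $Y := |\A| \times [0,1]$. Indeed, given a retraction $r : Y \to Y_0$, the data $F$ and $H$ assemble into a single map $G : Y_0 \to |\B|$ by setting $G(a,0) = F(a)$ and $G(c,t) = H(c,t)$; these agree on the overlap $|\C| \times \{0\}$ precisely because $H(-,0) = F|_{|\C|}$, and since $|\A| \times \{0\}$ and $|\C| \times [0,1]$ are closed in $Y_0$, the pasting lemma makes $G$ continuous. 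Then $H' := G \circ r$ is continuous, restricts to $H$ on $|\C| \times [0,1]$, and satisfies $H'(-,0) = F$. So everything comes down to constructing $r$.

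Second, I would construct $r$ by skeletal induction from a single model retraction. The model is the ``prism'' retraction $\rho_n : \Delta^n \times [0,1] \to \bigl(\Delta^n \times \{0\}\bigr) \cup \bigl(\partial \Delta^n \times [0,1]\bigr)$ obtained by central projection from the point $p = (b, 2)$, where $b$ is the barycenter of $\Delta^n$: each ray from $p$ through a point of $\Delta^n \times [0,1]$ meets the target set in exactly one point, and sending a point to that intersection is continuous and fixes the target pointwise. Now $r$ must be the identity on $Y_0$. Assuming $r$ has been defined and is continuous on $Y_0 \cup \bigl(|\skel^{k-1}(\A)| \times [0,1]\bigr)$, I would extend it over each $k$-simplex $\sigma \in \A \setminus \C$: the map is already prescribed on $(\partial \sigma \times [0,1]) \cup (\sigma \times \{0\})$, so transporting $\rho_k$ along an affine identification of $\sigma$ with $\Delta^k$ extends $r$ continuously over $\sigma \times [0,1]$. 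Distinct $k$-simplices overlap only along lower skeleta, where $r$ is already defined and where the model retractions restrict correctly, so the extensions are mutually consistent and give a continuous map on $Y_0 \cup \bigl(|\skel^{k}(\A)| \times [0,1]\bigr)$. Taking the union over all $k$ and using the weak (CW) topology on $|\A|$ — a map out of $|\A| \times [0,1]$ is continuous iff its restriction to each $\sigma \times [0,1]$ is — yields the retraction $r : Y \to Y_0$.

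The main obstacle I anticipate is not a single hard idea but the gluing bookkeeping: checking that the per-simplex prism retractions agree on shared faces (so the inductive extension is well defined) and that the resulting global map is genuinely continuous when $\A$ is infinite or infinite-dimensional, which is exactly where the weak topology on $|\A|$ must be used with care. A cleaner but less self-contained alternative is simply to quote that $(|\A|, |\C|)$ is a CW pair and that every CW pair has the homotopy extension property — equivalently, that the inclusion $|\C| \hookrightarrow |\A|$ is a cofibration; since the statement is labelled a Fact with a reference to Hatcher, this is presumably the route the paper takes, with the retraction argument above being the proof that underlies it.
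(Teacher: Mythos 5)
The paper states this as a Fact without proof, deferring to Hatcher, and your argument is a correct and complete reproduction of the standard proof found there: reduce the extension problem to exhibiting a retraction of $|\A| \times [0,1]$ onto $(|\A| \times \{0\}) \cup (|\C| \times [0,1])$, and build that retraction by skeletal induction over the simplices of $\A$ not in $\C$, using the prism retraction given by central projection from a point above the barycenter. The one subtlety you flag — continuity of the glued map for infinite complexes — is handled exactly as you indicate, since $[0,1]$ is compact, so $|\A| \times [0,1]$ carries the weak topology determined by the subspaces $\sigma \times [0,1]$.
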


\subsection{Loop Agreement}

We need a few more definitions before introducing loop agreement tasks.

\begin{defn}

Let $\C$ be a simplicial complex. An \emph{edge path} in $\C$ is an alternating sequence of vertices and edges, $v_1, e_1, v_2, e_2, \ldots, v_{k-1}, e_{k-1}, v_k$, where $e_i = \{v_i, v_{i+1}\}$. An \emph{edge loop} is an edge path with $v_0 = v_k$.

\end{defn}

\begin{defn}
Let $\C$ be a simplicial complex. Then a \emph{triangle loop} in $\C$ is a six-tuple $\lambda = (v_0, v_1, v_2, p_{01}, p_{12}, p_{20})$ such that each $v_i$ is a vertex in $\C$ and $p_{ij}$ is an edge path between $v_i$ and $v_j$.
\end{defn}

Triangle loops are indeed loops in the topological sense, but they can also be viewed as subcomplexes with designated vertices and edge paths. We now have the necessary tools and background to discuss loop agreement tasks. As previously stated, loop agreement is a class of tasks that models convergence of processes on an edge loop of a given space. The precise definition of loop agreement is given below \cite{Loop2}.

\begin{defn}

A \emph{loop agreement task} is a task $(\I, \Oh, \Gamma)$ for which $\I$ is the standard $2$-simplex, $\Oh$ is a (path-connected) $2$-dimensional simplicial complex with triangle loop $\lambda = (v_0, v_1, v_2, p_{01}, p_{12}, p_{20})$, and $\Gamma$ is defined as:

\[ \Gamma(\sigma) = \left\{
     \begin{array}{ll}
       \{v_i\} & : \sigma = \{i\} \\
       p_{ij} & : \sigma = \{i, j\} \\
       \Oh & : \sigma = \{0, 1, 2\} \\
     \end{array}
   \right. \]
   
\end{defn}

Notationally, we write $\Loop(\Oh, \lambda)$. Input vertices are carried to the designated vertices of $\ld$, the input edges are carried to paths between designated vertices, and the input triangle is carried to the whole output complex. The \emph{algebraic signature} of $\Loop(\Oh, \lambda)$ is $(\pi_1(\Oh), \lambda)$, and is used in the main theorem by Herlihy and Rajsbaum \cite{Loop2}:

\begin{thm}[Herlihy and Rajsbaum]
\label{original}
Task $\Loop(\K_1, \ld_1)$ implements $\Loop(\K_2, \ld_2)$ if and only if there exists a group homomorphism $h : \pi_1(\K_1) \rightarrow \pi_1(\K_2)$ such that $h([\ld_1]) = [\ld_2]$.
\end{thm}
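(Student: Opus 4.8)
The plan is to prove both directions separately, with the ``only if'' direction being essentially functoriality of $\pi_1$ applied to the implementation map, and the ``if'' direction requiring us to build an actual simplicial implementation from an abstract group homomorphism. For the forward direction, suppose $\Loop(\K_1, \ld_1)$ implements $\Loop(\K_2, \ld_2)$, so there is some $N$ and a simplicial map $\phi : \Bary^N(\K_1) \to \K_2$ with $(\phi \circ \Bary^N \circ \Gamma_1)(\sigma) \subseteq \Gamma_2(\sigma)$ for all $\sigma \in \I = \Delta^2$. Applying this condition on the edges $\{i,j\}$ says $\phi$ carries (a subdivision of) the path $p^1_{ij}$ into the path $p^2_{ij}$, and taking the three edges together, $\phi$ sends the subdivided triangle loop $\ld_1$ to a loop that is edge-path-homotopic to $\ld_2$ in $\K_2$ (since both are assembled from the $p^2_{ij}$ with matching endpoints). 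Since $|\Bary^N(\K_1)| = |\K_1|$ and $|\phi| \simeq$ the geometric realization of $\phi$, the induced map $h = |\phi|_* : \pi_1(\K_1) \to \pi_1(\K_2)$ is a homomorphism with $h([\ld_1]) = [\ld_2]$; one should note the basepoint issue is harmless since all spaces are path-connected and the classes $[\ld_i]$ are basepoint-independent up to the conjugacy that doesn't affect the statement (or one fixes $v_0$ throughout).

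For the reverse direction, assume $h : \pi_1(\K_1) \to \pi_1(\K_2)$ is a homomorphism with $h([\ld_1]) = [\ld_2]$. I would first realize $\ld_1$ as a map $\ld_1 : |\skel^1(\Delta^2)| \cong S^1 \to |\K_1|$ sending the boundary circle of $\Delta^2$ around the triangle loop. The composite $h$ applied at the level of loops gives a loop in $|\K_2|$ homotopic to $\ld_2$; concretely, build a continuous $f_0 : |\skel^1(\Delta^2)| \to |\K_2|$ whose class is $[\ld_2]$, and arrange (using the homotopy that $h([\ld_1]) = [\ld_2]$) a continuous map $g : |\K_1| \to |\K_2|$ extending this, i.e. $g$ restricted to the image of $\ld_1$ agrees up to homotopy with $\ld_2$. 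Here is the crux: I want a single continuous $g : |\K_1| \to |\K_2|$ with $g_* = h$ that also \emph{literally} sends the edge paths $p^1_{ij}$ into the right homotopy-rel-endpoints class of $p^2_{ij}$ — this exists because $\K_1$ is $2$-dimensional and a map on the $1$-skeleton realizing prescribed loop classes extends over the $2$-cells exactly when the attaching loops map to null-homotopic loops, which is guaranteed by $g_* = h$ being a well-defined homomorphism (the relations in $\pi_1(\K_1)$ coming from $2$-cells are killed). Then apply Simplicial Approximation (the stated Fact) to $g$ to get $N$ and a simplicial map $\phi : \Bary^N(\K_1) \to \K_2$ with $|\phi| \simeq g$; finally check the carrier-map containment $(\phi \circ \Bary^N \circ \Gamma_1)(\sigma) \subseteq \Gamma_2(\sigma)$ on vertices, edges, and the triangle of $\I$ — the triangle case is automatic since $\Gamma_2(\{0,1,2\}) = \K_2$, the vertex case is arranged by construction, and the edge case follows because $\phi$ approximates $g$ which was built to carry $p^1_{ij}$ correctly.

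The main obstacle is the reverse direction's passage from the purely algebraic datum $h$ to a geometric map $g : |\K_1| \to |\K_2|$ with both $g_* = h$ \emph{and} the concrete edge-path behavior needed for the carrier condition, rather than merely $g_* = h$ up to conjugation. This is where Facts~\ref{cellular} and~\ref{HEP} (cellular and homotopy extension) earn their keep: one builds $g$ cell-by-cell over $\skel^0 \subseteq \skel^1 \subseteq \skel^2 = \K_1$, using homotopy extension to adjust on the $1$-skeleton so the edge paths land in the prescribed classes, and cellular approximation plus the vanishing of $2$-cell relations to extend over the $2$-cells; Fact~\ref{only2} ensures nothing above dimension $2$ matters. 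A secondary subtlety is bookkeeping of basepoints when concatenating the three edge paths $p^1_{0i}$ into the loop $\ld_1$ versus the individual path-homotopy classes rel endpoints — I would handle this by fixing $v_0$ as the global basepoint and expressing everything in terms of paths from $v_0$, so that the three edge conditions are equivalent to the single loop condition $h([\ld_1]) = [\ld_2]$ together with freedom to choose where the other two vertices go.
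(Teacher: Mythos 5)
Your proposal is correct, and it follows essentially the route the paper takes: note that the paper itself imports this theorem from Herlihy and Rajsbaum without proof, but its proof of the generalization (Theorem \ref{main}) uses exactly your decomposition — a topological characterization of implementation obtained via simplicial approximation and the carrier condition for the forward direction, and, for the reverse direction, the realization of a homomorphism between fundamental groups of finite $2$-complexes by a continuous map carrying one distinguished loop to the other, which is precisely the content of Lemma \ref{2dim} that the paper also cites without proof. Your cell-by-cell construction of that realization (extend over the $2$-cells because $h$ kills the relators, then use homotopy extension to force the prescribed behavior on the edge paths) is the standard argument behind that lemma, so the attempt matches the paper's approach in substance.
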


The main contribution of this paper is parallel composition of tasks and the characterization of their relative power. We allow multiple tasks to implement another, and we generalize the above theorem to multiple tasks. We also show that a loop agreement task being implemented by two others is equivalent to the first being implemented by the composition of the second two.

\section{Composite Loop Agreement}

\subsection{Implementation by Multiple Tasks}

Informally, to implement one task by several others, we run protocols for each implementing task and use the combined output as a protocol complex. Given two loop agreement tasks, we will take the product of the output complexes and take the $2$-skeleton of the results; this becomes the output complex of the composite task. To obtain a loop in the output complex, we take the ``diagonal" of the product of the two original loops. We describe the construction of this loop in more detail.

\begin{defn}
Let $\ld_1 = (v_0, v_1, v_2, p_{01}, p_{12}, p_{20})$ and $\ld_2 = (w_0, w_1, w_2, q_{01}, q_{12}, q_{20})$ be triangle loops in complexes $\A$ and $\B$, respectively. Then the \emph{diagonal product} of $\ld_1$ and $\ld_2$, denoted $\ld_1 \star \ld_2$, is the triangle loop $(u_0, u_1, u_2, r_{01}, r_{12}, r_{20})$ in $\A \times \B$, where $u_i = (v_i, w_i)$. The path $r_{ij}$ is defined by traversing $p_{ij}$ while $w_i$ is fixed, followed by traversing $q_{ij}$ while $v_j$ is fixed. Note that we will use $p_{ij} \star q_{ij}$ to denote the path defined by $r_{ij}$ as above, though strictly speaking, the $\star$ operator denotes two different operations in $\ld_1 \star \ld_2$ and $p_{ij} \star q_{ij}$.

\end{defn}

\begin{defn}
Let $T_1 = \Loop(\K_1, \lambda_1)$, $T_2 = \Loop(\K_2, \lambda_2)$, and $T = \Loop(\K, \lambda)$ be loop agreement tasks. Let $\Gamma_1$, $\Gamma_2$, and $\Gamma$ be their respective specification maps. We say $T_1$ and $T_2$ \emph{implement} $T$ if there is an $N \in \N$ and a simplicial map $\phi : \Bary^N(\skel^2(\K_1 \times \K_2)) \rightarrow \K$ such that $(\phi \circ \Bary^N)(\skel^2(\Gamma_1(\sigma) \times \Gamma_2(\sigma))) \subseteq \Gamma(\sigma)$.
\end{defn}

Operationally, the participating processes first execute protocols for $T_1$ and $T_2$, ending up on a simplex of $\K_1 \times \K_2$. More precisely, because there are at most three participants, they end up on a simplex of $\skel^2(\K_1 \times \K_2)$. They then exchange results via $N$ rounds of reading and writing to ``scratchpad" read-write memory, ending up on a simplex of $\Bary^N (\skel^2(\K_1 \times \K_2))$. Finally, each process calls a decision map $\phi$ to choose a vertex in $\K$.

\subsection{Relative Power}

In this section we use the following notation for a continuous function that maps one triangle loop to another. If $\K_1$ and $\K_2$ are complexes with triangle loops $\lambda_1 = (v_0, v_1, v_2, p_{01}, p_{12}, p_{20})$ and $\lambda_2 = (w_0, w_1, w_2, q_{01}, q_{12}, q_{20})$, respectively, then we write $f : (\K_1, \lambda_1) \rightarrow (\K_2, \lambda_2)$ to denote a continuous function $f : |\K_1| \rightarrow |\K_2|$ such that $f(v_i) = w_i$ and $f(|p_{ij}|) \subseteq |q_{ij}|$.

We now state the main theorem of the paper.

\begin{thm}
\label{main}

Let $T_1 = \Loop(\K_1, \lambda_1)$, $T_2 = \Loop(\K_2, \lambda_2)$, and $T = \Loop(\K, \lambda)$. Then $T_1$ and $T_2$ implement $T$ if and only if there exists a group homomorphism $h : \pi_1(\K_1) \times \pi_1(\K_2) \rightarrow \pi_1(\K)$ such that $h([\lambda_1], [\lambda_2]) = [\lambda]$.

\end{thm}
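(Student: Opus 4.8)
The plan is to prove both directions by combining Theorem~\ref{original} with the product formula for fundamental groups (Fact~\ref{prodequiv} together with $\pi_1(X \times Y) \cong \pi_1(X) \times \pi_1(Y)$) and the $2$-skeleton invariance (Fact~\ref{only2}). The guiding idea is that the intermediate object $\skel^2(\K_1 \times \K_2)$, equipped with the diagonal product loop $\lambda_1 \star \lambda_2$, should itself be (essentially) a loop agreement task whose signature is $(\pi_1(\K_1) \times \pi_1(\K_2), ([\lambda_1],[\lambda_2]))$, so that ``$T_1$ and $T_2$ implement $T$'' becomes an ordinary single-task implementation in the sense of Theorem~\ref{original}.

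For the forward direction, suppose $T_1$ and $T_2$ implement $T$ via $\phi : \Bary^N(\skel^2(\K_1 \times \K_2)) \to \K$ satisfying the containment condition on the $\Gamma$'s. First I would check that $\phi$, regarded as a continuous map $|\skel^2(\K_1 \times \K_2)| \to |\K|$ (using $|\Bary^N(-)| = |-|$), sends the diagonal loop $|\lambda_1 \star \lambda_2|$ into a loop homotopic to $|\lambda|$: the vertex condition forces $u_i \mapsto v_i$, and the edge-path condition $(\phi \circ \Bary^N)(\skel^2(\Gamma_1 \times \Gamma_2)(\{i,j\})) \subseteq \Gamma(\{i,j\}) = p_{ij}$ forces each $|r_{ij}|$ into $|p_{ij}|$; hence $\phi_*([\lambda_1 \star \lambda_2]) = [\lambda]$ in $\pi_1(\K)$. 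Then I would identify $\pi_1(\skel^2(\K_1 \times \K_2)) \cong \pi_1(\K_1 \times \K_2) \cong \pi_1(|\K_1| \times |\K_2|) \cong \pi_1(\K_1) \times \pi_1(\K_2)$ via Fact~\ref{only2}, Fact~\ref{prodequiv}, and the product formula, tracking that under these isomorphisms the class $[\lambda_1 \star \lambda_2]$ corresponds to $([\lambda_1], [\lambda_2])$ --- this is the bookkeeping heart of the argument, since $\lambda_1 \star \lambda_2$ traverses $p_{ij}$ then $q_{ij}$ in the product, which is homotopic rel endpoints to the ``true'' diagonal and so projects to $[\lambda_1]$ and $[\lambda_2]$ on the two factors. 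Composing, $\phi_*$ gives the desired $h$.

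For the converse, suppose $h : \pi_1(\K_1) \times \pi_1(\K_2) \to \pi_1(\K)$ sends $([\lambda_1],[\lambda_2])$ to $[\lambda]$. Via the same chain of isomorphisms, $h$ is a homomorphism $\pi_1(\skel^2(\K_1 \times \K_2)) \to \pi_1(\K)$ sending $[\lambda_1 \star \lambda_2]$ to $[\lambda]$. Now I would like to invoke Theorem~\ref{original} directly: if $\skel^2(\K_1 \times \K_2)$ with loop $\lambda_1 \star \lambda_2$ were literally a loop agreement task, then $\Loop(\skel^2(\K_1 \times \K_2), \lambda_1 \star \lambda_2)$ implements $\Loop(\K, \lambda)$, yielding a simplicial map $\psi : \Bary^M(\skel^2(\K_1 \times \K_2)) \to \K$ with the required property on specification maps. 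The remaining work is to verify that this $\psi$ satisfies our definition of ``$T_1$ and $T_2$ implement $T$'' --- i.e., that $\Bary^M(\skel^2(\Gamma_1(\sigma) \times \Gamma_2(\sigma)))$ lands where it should --- which amounts to checking that the specification map of $\Loop(\skel^2(\K_1 \times \K_2), \lambda_1 \star \lambda_2)$ agrees, on each face of $\Delta^2$, with $\skel^2(\Gamma_1(\sigma) \times \Gamma_2(\sigma))$: on vertices both give $\{u_i\}$, on edges both give the path $p_{ij} \star q_{ij}$, and on the full triangle both give all of $\skel^2(\K_1 \times \K_2)$, using that $\Gamma_1(\{0,1,2\}) \times \Gamma_2(\{0,1,2\}) = \K_1 \times \K_2$.

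I expect the main obstacle to be the converse direction's reliance on treating $\skel^2(\K_1 \times \K_2)$ as a bona fide loop agreement task so that Theorem~\ref{original} applies as a black box; one must confirm that $\skel^2(\K_1 \times \K_2)$ is path-connected and $2$-dimensional (immediate) and, more delicately, that $\lambda_1 \star \lambda_2$ is a legitimate triangle loop in it --- in particular that each hybrid path $r_{ij}$ (traverse $p_{ij}$ with second coordinate fixed at $w_i$, then $q_{ij}$ with first coordinate fixed at $v_j$) actually consists of edges of $\K_1 \times \K_2$, which follows from Definition~\ref{catprod} since each such edge projects to a vertex or edge in each factor. The other delicate point, shared by both directions, is the naturality/compatibility check that the isomorphism $\pi_1(\skel^2(\K_1 \times \K_2)) \cong \pi_1(\K_1) \times \pi_1(\K_2)$ carries $[\lambda_1 \star \lambda_2]$ to $([\lambda_1],[\lambda_2])$; I would handle this by noting that the projections $\rho_1, \rho_2$ induce the two components of the isomorphism and that $\rho_k \circ (\lambda_1 \star \lambda_2)$ is, on each side $p_{ij} \star q_{ij}$, a reparametrized $p_{ij}$ (resp.\ $q_{ij}$) composed with a constant, hence homotopic rel endpoints to $\lambda_k$ itself.
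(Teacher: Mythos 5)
Your forward direction is essentially the paper's argument (Lemma~\ref{imptopbk} followed by the forward half of Theorem~\ref{topalg}): realize the decision map geometrically, note it sends $[\lambda_1\star\lambda_2]$ to $[\lambda]$, and transport across the isomorphism $\pi_1(\skel^2(\K_1\times\K_2))\cong\pi_1(\K_1)\times\pi_1(\K_2)$, with the bookkeeping that the projections carry $[\lambda_1\star\lambda_2]$ to $([\lambda_1],[\lambda_2])$. That half is fine.

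The converse has a genuine gap, and it sits exactly where you assert that the specification map of $\Loop(\skel^2(\K_1\times\K_2),\lambda_1\star\lambda_2)$ ``agrees on edges'' with $\skel^2(\Gamma_1(\sigma)\times\Gamma_2(\sigma))$. It does not. For an edge $\sigma=\{i,j\}$ of $\Delta^2$ you have $\Gamma_1(\sigma)\times\Gamma_2(\sigma)=p_{ij}\times q_{ij}$, the categorical product of the two edge paths --- a $2$-dimensional ``grid'' subcomplex of $\K_1\times\K_2$ --- whereas the loop agreement specification of the composite task assigns to $\sigma$ only the single diagonal path $p_{ij}\star q_{ij}$. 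The decision map $\psi$ you obtain from Theorem~\ref{original} applied to $\Loop(\skel^2(\K_1\times\K_2),\lambda_1\star\lambda_2)$ is therefore only guaranteed to carry $\Bary^M(p_{ij}\star q_{ij})$ into $p_{ij}$; it need not carry all of $\Bary^M(\skel^2(p_{ij}\times q_{ij}))$ into $p_{ij}$, which is what the definition of ``$T_1$ and $T_2$ implement $T$'' demands. (The paper makes this very distinction explicit in Section~4.3: the task with specification $\Gamma_1\times\Gamma_2$ is \emph{not} a loop agreement task precisely because $\lambda_1\times\lambda_2$ is not a loop.) Closing the gap requires the content of Lemma~\ref{imptopfwd}: build a deformation retraction of $X=|p_{01}\times q_{01}|\cup|p_{12}\times q_{12}|\cup|p_{20}\times q_{20}|$ onto $|\lambda_1\star\lambda_2|$, extend it to all of $|\K_1\times\K_2|$ by homotopy extension (Fact~\ref{HEP}), make it cellular (Fact~\ref{cellular}), precompose your continuous map with this collapse, and only then take a simplicial approximation. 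The paper's route --- Theorem~\ref{imptop} (with this collapse argument) plus Theorem~\ref{topalg} (using Lemma~\ref{2dim} to realize the homomorphism continuously) --- is not an avoidable detour; your attempt to black-box everything through Theorem~\ref{original} silently drops the edge condition on the product rectangles.
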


Theorem \ref{main} describes only two loop agreement tasks implementing a third, but by finite induction, one can easily generalize this to $n$ tasks. Its proof is broken down into two other theorems, which jointly prove Theorem \ref{main}. The first theorem is a topological characterization of two tasks implementing a third, while the second theorem is on the correspondence between continuous functions and group homomorphisms.

\begin{thm}
\label{imptop}
Tasks $T_1$ and $T_2$ implement $T$ if and only if there exists a continuous function $f : (\skel^2(\K_1 \times \K_2), \ld_1 \star \ld_2) \rightarrow (\K, \ld)$.
\end{thm}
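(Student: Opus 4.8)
The plan is to prove the two directions separately, using the standard toolkit of simplicial approximation, cellular approximation, and the homotopy extension property that the excerpt has set up.

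\textbf{Forward direction.} Suppose $T_1$ and $T_2$ implement $T$, so there is an $N$ and a simplicial map $\phi : \Bary^N(\skel^2(\K_1 \times \K_2)) \rightarrow \K$ with $(\phi \circ \Bary^N)(\skel^2(\Gamma_1(\sigma) \times \Gamma_2(\sigma))) \subseteq \Gamma(\sigma)$ for every $\sigma \in \Delta^2$. I would take $f = |\phi|$, viewed as a continuous map $|\skel^2(\K_1 \times \K_2)| \rightarrow |\K|$ (legitimate since $|\Bary^N(\cdot)| = |\cdot|$). The work is to check that $f$ maps the triangle loop $\ld_1 \star \ld_2$ into $\ld$, i.e.\ $f(u_i) = \lambda$'s $i$-th vertex and $f(|r_{ij}|) \subseteq |($the $ij$-path of $\lambda)|$. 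For the vertices, apply the containment to $\sigma = \{i\}$: $\Gamma_1(\{i\}) = \{v_i\}$, $\Gamma_2(\{i\}) = \{w_i\}$, so $\skel^2(\{v_i\}\times\{w_i\}) = \{u_i\}$ and hence $\phi$ sends (the subdivision of) $u_i$ to the vertex of $\lambda$. For the paths, apply the containment to $\sigma = \{i,j\}$: $\Gamma_1(\{i,j\}) = p_{ij}$ and $\Gamma_2(\{i,j\}) = q_{ij}$ are (the complexes generated by) edge paths, hence $1$-dimensional, so $\skel^2(p_{ij} \times q_{ij}) = p_{ij} \times q_{ij}$, and $|p_{ij} \times q_{ij}|$ is exactly the image of $|r_{ij}|$ (the diagonal-product path lives inside the product of the two path-subcomplexes). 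Thus $f(|r_{ij}|) \subseteq |\phi|(|p_{ij}\times q_{ij}|) \subseteq |(\text{the } ij\text{-path of }\lambda)|$, giving $f : (\skel^2(\K_1 \times \K_2), \ld_1 \star \ld_2) \rightarrow (\K, \ld)$ as required.

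\textbf{Reverse direction.} Suppose such an $f$ exists. I would produce the simplicial map $\phi$ in two moves. First, homotope $f$ to a map that is simplicial on the triangle loop without moving it: $f$ restricted to $|\ld_1 \star \ld_2|$ already maps each designated vertex $u_i$ to the corresponding vertex of $\lambda$ and each path $|r_{ij}|$ into $|(\text{the } ij\text{-path of }\lambda)|$, so after a preliminary subdivision of the domain I can reparametrize/approximate so that on the (subdivided) loop $f$ is literally a simplicial map into $\ld \subseteq \K$ carrying each $r_{ij}$ into the $ij$-path; this uses simplicial approximation on the $1$-complex $|\ld_1\star\ld_2|$ together with Fact~\ref{HEP} to extend the correcting homotopy back to all of $|\skel^2(\K_1\times\K_2)|$, keeping the map unchanged on the loop. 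Second, apply Simplicial Approximation (the Simplicial Approximation fact) to the resulting continuous map $|\skel^2(\K_1\times\K_2)| \rightarrow |\K|$, but relative to the subcomplex on which it is already simplicial — the standard relative version of simplicial approximation subdivides only away from that subcomplex — obtaining $N \in \N$ and a genuine simplicial map $\phi : \Bary^N(\skel^2(\K_1\times\K_2)) \rightarrow \K$ that still sends the (now appropriately subdivided) $r_{ij}$ into the $ij$-path of $\lambda$ and each $u_i$ to the corresponding vertex. Finally I verify the carrier-map condition $(\phi\circ\Bary^N)(\skel^2(\Gamma_1(\sigma)\times\Gamma_2(\sigma)))\subseteq\Gamma(\sigma)$ case by case on $\dim\sigma$: for $\sigma=\{0,1,2\}$ it is automatic because $\Gamma(\sigma)=\K$; for $\sigma=\{i,j\}$ it follows from $\phi$ carrying $r_{ij}$ into the $ij$-path (and $\skel^2(p_{ij}\times q_{ij})=p_{ij}\times q_{ij}$, whose realization is covered by $|r_{ij}|$); for $\sigma=\{i\}$ it follows from $\phi(u_i)$ being the right vertex.

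\textbf{Main obstacle.} The delicate point is the reverse direction: turning the continuous $f$ into a simplicial $\phi$ while \emph{simultaneously} keeping exact control on the triangle loop, since a blind application of simplicial approximation could perturb the images of the $r_{ij}$ out of the prescribed $ij$-paths. The fix is to first make $f$ simplicial on the $1$-dimensional loop, invoke Homotopy Extension (Fact~\ref{HEP}) to carry that correction to the whole $2$-complex without disturbing the loop, and then use simplicial approximation relative to the loop subcomplex. I also need the small geometric lemma that $|r_{ij}| \subseteq |p_{ij}\times q_{ij}|$ and that $p_{ij}\times q_{ij}$ and $\{u_i\}$ are already $2$-dimensional (indeed $\le 1$-dimensional), so that the $\skel^2$ in the definition of implementation is vacuous on the relevant subcomplexes — this is what lets the loop-tracking condition and the carrier-map condition line up exactly.
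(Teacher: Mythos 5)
Your direction from ``$T_1$ and $T_2$ implement $T$'' to the continuous map is correct and is essentially the paper's Lemma~\ref{imptopbk}: take $f=|\phi|$ and read off the loop conditions from the carrier condition at $\sigma=\{i\}$ and $\sigma=\{i,j\}$. (For the latter you only need the containment $|r_{ij}|\subseteq|p_{ij}\times q_{ij}|$; your parenthetical claim that $|p_{ij}\times q_{ij}|$ ``is exactly the image of $|r_{ij}|$'' is not true, but it is also not needed there.)

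The other direction has a genuine gap, located exactly where you wrote that $\skel^2(p_{ij}\times q_{ij})=p_{ij}\times q_{ij}$ has ``realization \ldots covered by $|r_{ij}|$.'' That is false: $p_{ij}\times q_{ij}$ is the categorical product of two nondegenerate edge paths, so its realization is a $2$-dimensional region (homotopy equivalent to a square), while $|r_{ij}|$ is only the $1$-dimensional ``across, then up'' diagonal inside it. The carrier condition for $\sigma=\{i,j\}$ demands that $\phi$ send \emph{all} of $p_{ij}\times q_{ij}$ into the $1$-dimensional path $\Gamma(\{i,j\})$, yet the hypothesis only controls $f$ on the diagonal $r_{ij}$; off the diagonal $f$ may wander anywhere in $\K$, and no amount of relative simplicial approximation repairs that. (The obstacle you flagged --- that simplicial approximation might perturb the loop --- is actually the easy part: a simplicial approximation of a map carried by $\Gamma$ is automatically carried by $\Gamma$, since it lands in the smallest simplex containing the image point.) The missing idea is a preliminary collapse: each $|p_{ij}\times q_{ij}|$ deformation retracts onto $|r_{ij}|$; assemble these into a homotopy on $X=\bigcup_{ij}|p_{ij}\times q_{ij}|$ fixing $|\ld_1\star\ld_2|$, extend it over the whole complex by Homotopy Extension (Fact~\ref{HEP}), restrict to the $2$-skeleton via Cellular Approximation (Fact~\ref{cellular}) to obtain a self-map $g$ collapsing $\skel^2(X)$ onto $|\ld_1\star\ld_2|$ while fixing the loop, and only then take a simplicial approximation of $F=f\circ g$. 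This is the route the paper's Lemma~\ref{imptopfwd} takes.
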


We prove Theorem \ref{imptop} by proving each direction individually via the following lemmas.

\begin{lem}
\label{imptopfwd}
If there is a continuous function $f : (\skel^2(\K_1 \times \K_2), \ld_1 \star \ld_2) \rightarrow (\K, \ld)$, then $T_1$ and $T_2$ implement $T$.
\end{lem}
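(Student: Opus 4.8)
The plan is to convert the continuous function $f$ into the simplicial decision map required by the definition of ``$T_1$ and $T_2$ implement $T$'' by invoking the Simplicial Approximation fact, and then to check that the approximation still respects the task specifications. Concretely, I would start with the hypothesized continuous $f : |\skel^2(\K_1 \times \K_2)| \rightarrow |\K|$, apply Simplicial Approximation (the fact stated earlier) to obtain an $N \in \N$ and a simplicial map $\phi : \Bary^N(\skel^2(\K_1 \times \K_2)) \rightarrow \K$ with $|\phi| \simeq f$, and then argue that $\phi$ is a valid witness that $T_1$ and $T_2$ implement $T$.

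The main work is verifying the specification condition $(\phi \circ \Bary^N)(\skel^2(\Gamma_1(\sigma) \times \Gamma_2(\sigma))) \subseteq \Gamma(\sigma)$ for every $\sigma \in \I = \Delta^2$. I would do this by cases on $\dim \sigma$. For $\sigma = \{0,1,2\}$, the right-hand side is all of $\K = \Gamma(\sigma)$, so there is nothing to check. For $\sigma = \{i\}$ a vertex, $\Gamma_1(\sigma) \times \Gamma_2(\sigma)$ is the single vertex $(v_i, w_i) = u_i$, and since $f$ maps $(\K_1 \times \K_2, \ld_1 \star \ld_2) \to (\K, \ld)$ we have $f(u_i) = $ the vertex of $\ld$; a simplicial approximation of $f$ must send $u_i$ (which is already a vertex, fixed under barycentric subdivision) into the carrier of $f(u_i)$, which is just that vertex, so $\phi(u_i)$ is the $i$-th designated vertex of $\ld$, lying in $\Gamma(\{i\})$. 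For $\sigma = \{i,j\}$ an edge, $\skel^2(\Gamma_1(\sigma) \times \Gamma_2(\sigma)) = \skel^2(p_{ij} \times q_{ij})$; since $f(|p_{ij} \star q_{ij}|) \subseteq |q_{ij}^{\K}|$ — here writing $q_{ij}^{\K}$ for the $i,j$-path of $\ld$ — and $|p_{ij} \star q_{ij}|$ is precisely the geometric realization of the relevant $1$-skeleton, every point of the subdivided product edge-region maps under $f$ into the sub-path $|q_{ij}^\K|$, hence its simplicial approximation $\phi$ lands in the subcomplex $q_{ij}^\K = \Gamma(\{i,j\})$. The one subtlety to handle carefully is that $\skel^2(p_{ij}\times q_{ij})$ contains $2$-simplices, not just the $1$-skeleton, but $|p_{ij} \times q_{ij}|$ is still contained in the region whose $f$-image lies in $|q_{ij}^\K|$, because $p_{ij}$ and $q_{ij}$ are paths (their realizations live in the appropriate faces) and the diagonal-product path $p_{ij} \star q_{ij}$ traverses exactly this product region; I would spell out that $|\Bary^N(\skel^2(p_{ij} \times q_{ij}))| = |p_{ij} \times q_{ij}| \subseteq$ the domain region $f$ sends into $|q_{ij}^\K|$, so by the simplicial-approximation property $\phi$ of that subcomplex lands in $q_{ij}^\K$.

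I expect the main obstacle to be bookkeeping rather than a deep idea: one must be careful that the barycentric subdivision in the definition of multiple implementation is applied to $\skel^2(\K_1 \times \K_2)$ and that $\Bary^N$ commutes appropriately with restriction to the subcomplexes $\skel^2(\Gamma_1(\sigma) \times \Gamma_2(\sigma))$, i.e. that $\Bary^N(\skel^2(\Gamma_1(\sigma) \times \Gamma_2(\sigma)))$ is genuinely a subcomplex of $\Bary^N(\skel^2(\K_1 \times \K_2))$ on which $\phi$ is defined and still a simplicial approximation of $f$ restricted there. This holds because $\Gamma_1(\sigma) \times \Gamma_2(\sigma) \subseteq \K_1 \times \K_2$ is a subcomplex, $\skel^2$ and $\Bary^N$ both preserve subcomplex inclusions, and a simplicial approximation of $f$ restricts to a simplicial approximation of $f$ on any subcomplex. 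Once that is observed, the argument is the case analysis above, so I would keep the proof short: cite Simplicial Approximation, note the restriction remark, and dispatch the three cases.
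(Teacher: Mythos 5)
Your vertex case and your $|\sigma|=3$ case are fine, but the edge case contains a genuine gap, and it is exactly the point where the real work of the lemma lies. The hypothesis $f : (\skel^2(\K_1 \times \K_2), \ld_1 \star \ld_2) \rightarrow (\K, \ld)$ only tells you that $f$ carries the \emph{diagonal path} $|p_{ij} \star q_{ij}|$ into the corresponding path of $\ld$; it says nothing about where $f$ sends the rest of $|p_{ij} \times q_{ij}|$. That product is genuinely $2$-dimensional (if $p_{ij}$ and $q_{ij}$ each contain an edge, the categorical product already contains $3$-simplices, so $\skel^2(p_{ij}\times q_{ij})$ contains $2$-simplices), and the diagonal path $p_{ij}\star q_{ij}$ is just one $1$-dimensional route through it. Your assertion that ``every point of the subdivided product edge-region maps under $f$ into the sub-path'' does not follow from the hypothesis, and your attempted patch --- that $|p_{ij}\times q_{ij}|$ ``is still contained in the region whose $f$-image lies in $|q_{ij}^{\K}|$'' --- is precisely the unjustified claim restated. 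A simplicial approximation of $f$ alone therefore cannot be shown to satisfy the specification condition on edges.

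The missing idea is to modify $f$ \emph{before} approximating it. The paper sets $X = \bigcup_{ij} |p_{ij}\times q_{ij}|$, observes that each $|p_{ij}\times q_{ij}|$ deformation retracts onto $|p_{ij}\star q_{ij}|$ (products of paths are contractible onto the L-shaped diagonal), extends this retraction to all of $|\K_1\times\K_2|$ using the Homotopy Extension fact, restricts to the $2$-skeleton and makes the result cellular via Cellular Approximation, obtaining a map $g$ that fixes $|\ld_1\star\ld_2|$ and collapses $\skel^2(X)$ onto it. One then takes a simplicial approximation of $F = f\circ g$ rather than of $f$: now the edge case works because $g$ first pushes all of $|\Gamma_1(\sigma)\times\Gamma_2(\sigma)|$ onto the diagonal, where the hypothesis on $f$ applies. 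Your framework (simplicial approximation plus a case analysis on $\dim\sigma$) is the right outer shell, but without this collapsing step the proof does not go through.
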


\begin{proof}
Suppose such a function $f$ exists, and let $\Gamma_1$, $\Gamma_2$, and $\Gamma$ be the specification maps for $T_1$, $T_2$, and $T$, respectively. To prove $T_1$ and $T_2$ implement $T$, we require an $N \in \N$ and a simplicial map $\phi : \Bary^N(\skel^2(\K_1 \times \K_2)) \rightarrow \K$ such that for each $\sigma \in \I$, we have $(\phi \circ \Bary^N)(\skel^2(\Gamma_1(\sigma) \times \Gamma_2(\sigma))) \subseteq \Gamma(\sigma)$. We will construct such a $\phi$ by taking a simplicial approximation of a suitably defined continuous function.

Let $p_{01}$, $p_{12}$, and $p_{20}$, and $q_{01}$, $q_{12}$, and $q_{20}$ be the designated edge paths of $\ld_1$ and $\ld_2$, respectively. Consider $X = |(p_{01} \times q_{01})| \cup |(p_{12} \times q_{12})| \cup |(p_{20} \times q_{20})| \subseteq |\K_1 \times \K_2|$ as a topological subspace. Clearly, each $|p_{ij} \times q_{ij}|$ deformation retracts to the corresponding path $|p_{ij} \star q_{ij}|$ in $|\lambda_1 \star \lambda_2|$. In other words, we have a continuous function $H : X \times [0, 1] \rightarrow |\K_1 \times \K_2|$ such that $H(x, 0) = x$, $H(X, 1) = |\ld_1 \star \ld_2|$, and $H(a, t) = a$ for each $a \in |\ld_1 \star \ld_2|$, $x \in X$, and $t \in [0, 1]$. Now using Fact \ref{HEP}, we can extend $H$ to a continuous function $H' : |\K_1 \times \K_2| \times [0, 1] \rightarrow |\K_1 \times \K_2|$. In particular, define $r : |\K_1 \times \K_2| \rightarrow |\K_1 \times \K_2|$ as $r(x) = H(x, 1)$. This is a continuous function from $|\K_1 \times \K_2|$ to itself that fixes $|\ld_1 \star \ld_2|$ while collapsing $X$ to $|\ld_1 \star \ld_2|$. We restrict $r$ to $|\skel^2(\K_1 \times \K_2)|$ and invoke Fact \ref{cellular} to get a function $g : |\skel^2(\K_1 \times \K_2)| \rightarrow |\skel^2(\K_1 \times \K_2)|$ that fixes $|\ld_1 \star \ld_2|$ while collapsing $\skel^2(X)$ to $|\ld_1 \star \ld_2|$. Now let $F = f \circ g$. This is a continuous function $F : |\skel^2(\K_1 \times \K_2)| \rightarrow |\K|$ which maps $\ld_1 \star \ld_2$ to $\ld$. 

To show $F$ is carried by $\Gamma$, first consider the case where $|\sigma| = 1$. Then the point $|\Gamma_1(\sigma) \times \Gamma_2(\sigma)|$ is contained in $|\lambda_1 \star \lambda_2|$, so is fixed under $g$, and hence mapped to the appropriate point in $\ld$ by the given function $f$. The case $|\sigma| = 2$ is similar. We have $|\Gamma_1(\sigma) \times \Gamma_2(\sigma)| \subseteq X$, which collapses to $|\lambda_1 \star \lambda_2|$ under $g$. The function $f$ maps this to $\ld$, as desired. The final case is when $|\sigma| = 3$, which does not require any part of the proof above, since $\Gamma(\sigma) = \K$. In all cases, we see that $F$ is carried by $\Gamma$. Letting $\phi : \Bary^N(\skel^2(\K_1 \times \K_2)) \rightarrow \K$ be a simplicial approximation of $F$, $\phi$ is also carried by $\Gamma$, so we have the required decision map.

\end{proof}

\begin{lem}
\label{imptopbk}
If tasks $T_1$ and $T_2$ implement $T$, then there is a continuous function $f : (\skel^2(\K_1 \times \K_2), \ld_1 \star \ld_2) \rightarrow (\K, \ld)$.
\end{lem}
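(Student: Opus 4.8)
The plan is to reverse-engineer a continuous function from the simplicial decision map that witnesses implementation. Suppose $T_1$ and $T_2$ implement $T$, so there is an $N \in \N$ and a simplicial map $\phi : \Bary^N(\skel^2(\K_1 \times \K_2)) \rightarrow \K$ with $(\phi \circ \Bary^N)(\skel^2(\Gamma_1(\sigma) \times \Gamma_2(\sigma))) \subseteq \Gamma(\sigma)$ for every $\sigma \in \I$. The geometric realization $|\phi|$ gives a continuous function $|\skel^2(\K_1 \times \K_2)| = |\Bary^N(\skel^2(\K_1 \times \K_2))| \rightarrow |\K|$, and I would take $f = |\phi|$. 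It remains to verify that $f$ really is a map of triangle loops, i.e.\ that $f(u_i) = w_i$ where $u_i = (v_i, w_i)$ is a vertex of $\ld_1 \star \ld_2$, and that $f(|p_{ij} \star q_{ij}|) \subseteq |q_{ij}|$ for each designated path of $\ld = (w_0, w_1, w_2, q_{01}, q_{12}, q_{20})$.

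The vertex condition comes from the $|\sigma| = 1$ case of the carrier constraint: when $\sigma = \{i\}$, $\Gamma_1(\sigma) \times \Gamma_2(\sigma) = \{v_i\} \times \{w_i\} = \{u_i\}$ is a single vertex, $\skel^2$ of it is itself, and $\Gamma(\{i\}) = \{w_i\}$, so $\phi$ (after subdivision, which does nothing to a vertex) sends $u_i$ to $w_i$; hence $f(u_i) = w_i$. For the path condition, take $\sigma = \{i,j\}$, so $\Gamma_1(\sigma) = p_{ij}$ and $\Gamma_2(\sigma) = q_{ij}$. The carrier constraint gives $(\phi \circ \Bary^N)(\skel^2(p_{ij} \times q_{ij})) \subseteq \Gamma(\{i,j\}) = q_{ij}$, hence $|\phi|(|\skel^2(p_{ij} \times q_{ij})|) \subseteq |q_{ij}|$. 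Since $p_{ij}$ and $q_{ij}$ are $1$-dimensional, $p_{ij} \times q_{ij}$ has dimension at most $2$, so $\skel^2(p_{ij} \times q_{ij}) = p_{ij} \times q_{ij}$ and $|\skel^2(p_{ij} \times q_{ij})| = |p_{ij} \times q_{ij}|$. Now the designated path $|p_{ij} \star q_{ij}|$ of the diagonal product is a subspace of $|p_{ij}| \times |q_{ij}|$ (it traverses $p_{ij}$ with the $\K_2$-coordinate pinned, then $q_{ij}$ with the $\K_1$-coordinate pinned), and by Fact~\ref{prodequiv}, $|p_{ij}| \times |q_{ij}| \simeq |p_{ij} \times q_{ij}|$; I would use the explicit identification there to conclude $|p_{ij} \star q_{ij}| \subseteq |p_{ij} \times q_{ij}|$ as subsets of $|\K_1 \times \K_2|$, whence $f(|p_{ij} \star q_{ij}|) \subseteq |q_{ij}|$.

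The main obstacle is the bookkeeping around the product-of-complexes versus product-of-spaces distinction: I need the diagonal path $|p_{ij} \star q_{ij}|$, which is defined as a literal path inside $|\K_1 \times \K_2|$, to lie inside the realization $|p_{ij} \times q_{ij}|$ of the sub-product-complex, and I need the $2$-skeleton operator not to discard anything relevant — both of which hold because all the relevant sub-complexes are at most $2$-dimensional, but this deserves to be spelled out carefully. Once that containment is in hand, the rest is immediate: restrict $f = |\phi|$ along the inclusions and read off the two defining properties of a map $(\skel^2(\K_1 \times \K_2), \ld_1 \star \ld_2) \rightarrow (\K, \ld)$ directly from the $|\sigma| = 1$ and $|\sigma| = 2$ clauses of the carrier condition; the $|\sigma| = 3$ clause imposes no constraint since $\Gamma(\{0,1,2\}) = \K$.
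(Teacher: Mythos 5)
Your proof is correct and takes essentially the same approach as the paper, which also just sets $f = |\phi|$ and notes that the carrier conditions force $f$ to send $\ld_1 \star \ld_2$ to $\ld$ (you simply spell out the $|\sigma|=1$ and $|\sigma|=2$ cases that the paper leaves implicit). One small simplification: you do not need Fact~\ref{prodequiv} for the containment $|p_{ij} \star q_{ij}| \subseteq |p_{ij} \times q_{ij}|$, since the diagonal path $p_{ij} \star q_{ij}$ is by construction a subcomplex of the product complex $p_{ij} \times q_{ij}$, so the containment of geometric realizations is immediate.
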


\begin{proof}
Assuming $T_1$ and $T_2$ implement $T$, we have a simplicial map $\phi: \Bary^N(\skel^2(\K_1 \times \K_2)) \rightarrow \K$ that is carried by $\Gamma$. In particular, $\phi$ maps $\ld_1 \star \ld_2$ to $\ld$. Let $f : (\skel^2(\K_1 \times \K_2), \ld_1 \star \ld_2) \rightarrow (\K, \ld)$, defined by $f(x) = |\phi|(x)$. Then $f$ maps $|\ld_1 \star \ld_2|$ to $|\ld|$ since $\phi$ does this as well.
\end{proof}

Lemmas \ref{imptopfwd} and \ref{imptopbk} together prove Theorem \ref{imptop}. Next, we prove the correspondence between continuous functions and group homomorphisms. In order to do this, we refer to the following result shown in Herlihy and Rajsbaum \cite{Loop2}.

\begin{lem}
\label{2dim}

Let $\K$ and $\Ll$ be finite, connected, $2$-dimensional simplicial complexes, and let $h : \pi_1(\K) \rightarrow \pi_1(\Ll)$ be a homomorphism with $h([\sigma]) = [\tau]$. Then there exists a continuous $f : |\K| \rightarrow |\Ll|$ such that $f_* = h$ and $f \circ \sigma = \tau$.

\end{lem}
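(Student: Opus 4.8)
The plan is to build the continuous map $f$ one skeleton at a time, using the homomorphism $h$ to dictate its behavior on loops and the $2$-dimensionality of $\Ll$ (together with Fact \ref{only2}) to extend over the $2$-cells. First I would fix a maximal tree $\mathcal{T}$ in the $1$-skeleton of $\K$; contracting $\mathcal{T}$ identifies $\pi_1(\K)$ with a free group $F$ on the edges of $\K \setminus \mathcal{T}$, modulo the relations coming from the boundaries of the $2$-simplexes of $\K$ (the standard edge-path/presentation description of $\pi_1$ of a $2$-complex). Define $f$ on $\skel^0(\K)$ by sending every vertex to the basepoint $w_0$ of $\Ll$ (the endpoint through which we read $h$), and on each tree edge as the constant path at $w_0$. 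For each non-tree edge $e$, the generator $[e] \in \pi_1(\K)$ has an image $h([e]) \in \pi_1(\Ll)$; represent this image by an \emph{edge loop} in $\Ll$ (possible since $\pi_1$ is generated by edge loops once we pick a spanning tree of $\Ll$), and let $f$ traverse that edge loop along $e$. This defines $f$ on $\skel^1(\K)$ so that $f_*$ agrees with $h$ on generators, hence $f_\ast = h$ on all of $\pi_1(\K)$.

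The key step is extending $f$ over the $2$-simplexes. For each $2$-simplex $\tau$ of $\K$, its boundary $\partial\tau$ is an edge loop representing the trivial element of $\pi_1(\K)$, so by construction $f|_{\partial\tau}$ represents $h$ applied to that trivial element, i.e.\ the identity of $\pi_1(\Ll)$; therefore $f|_{\partial\tau} : S^1 \to |\Ll|$ is null-homotopic and extends to a continuous map of the disk $|\tau| \to |\Ll|$. Doing this for every $2$-simplex and invoking the homotopy extension property (Fact \ref{HEP}) to glue consistently yields a continuous $f : |\K| \to |\Ll|$ with $f_\ast = h$. (Here is where finiteness and $2$-dimensionality of both complexes matter: there are no higher cells to worry about, and $\pi_1$ is computed entirely from the $2$-skeleton.)

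It remains to arrange $f \circ \sigma = \tau$ exactly, not merely up to homotopy. Since $h([\sigma]) = [\tau]$, the loop $f \circ \sigma$ is homotopic to $\tau$ in $|\Ll|$. I would first homotope $f$ so that it is already cellular and so that $f\circ\sigma$ traverses $\tau$ as an edge loop — both the generator loops and $\tau$ being edge loops, this is a combinatorial adjustment on the $1$-skeleton — and then apply Fact \ref{HEP} once more with $\C = |\sigma|$ (the subcomplex carrying the loop): the homotopy from $f\circ\sigma$ to $\tau$ on $|\sigma|$ extends to a homotopy of $f$ on all of $|\K|$, after which the perturbed map sends $\sigma$ to $\tau$ on the nose while still inducing $h$.

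The main obstacle I anticipate is bookkeeping rather than conceptual: making the passage from ``$f_\ast = h$'' to the pointwise condition $f\circ\sigma = \tau$ genuinely rigorous requires care that the several homotopy-extension steps can be performed simultaneously without destroying $f_\ast = h$, and that representing arbitrary elements of $\pi_1(\Ll)$ by honest edge loops (rather than just homotopy classes) is legitimate — this is exactly where one leans on the spanning-tree presentation of the fundamental group of a $2$-complex. Since the paper cites this as a result already proved in \cite{Loop2}, I expect the authors to either quote it directly or give only the skeleton-by-skeleton argument sketched above.
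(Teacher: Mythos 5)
The paper does not actually prove Lemma \ref{2dim}: it is imported verbatim as a known result of Herlihy and Rajsbaum (their ``Algebraic Topology of Distributed Tasks'' paper, cited here as \cite{Loop2}), so there is no in-paper argument to compare against. Judged on its own, your skeleton-by-skeleton construction is the standard and correct way to prove it, and it is essentially the argument given in that reference: collapse a spanning tree to present $\pi_1(\K)$, send the $1$-skeleton to edge-loop representatives of the $h$-images of the generators, observe that the boundary of each $2$-simplex is a relator and hence maps to a null-homotopic loop, and fill in the disks. Two small remarks. First, Fact \ref{HEP} is not needed to extend over the $2$-cells: each extension over a disk is chosen independently, and the pieces agree automatically because the $2$-simplexes meet only along the $1$-skeleton where $f$ is already fixed; the homotopy extension property is genuinely needed only in your last step, where you deform $f$ so that $f \circ \sigma = \tau$ holds on the nose rather than up to based homotopy. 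Second, that last step deserves the care you flag: one should take the based loop homotopy from $f \circ \sigma$ to $\tau$, realize it as a homotopy on the subcomplex carried by $\sigma$ (with attention to edges that $\sigma$ traverses more than once), and extend by Fact \ref{HEP}; since the extension can be taken to fix the basepoint, $f_*$ is unchanged and still equals $h$. With those caveats your sketch is sound and matches the cited source's approach.
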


\begin{thm}
\label{topalg}

There exists a continuous function $f : (\skel^2(\K_1 \times \K_2), \ld_1 \star \ld_2) \rightarrow (\K, \ld)$ if and only if there exists a group homomorphism $h : \pi_1(\K_1) \times \pi_1(\K_2) \rightarrow \pi_1(\K)$ such that $h([\ld_1], [\ld_2]) = [\ld]$.

\end{thm}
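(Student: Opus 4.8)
\textbf{Proof plan for Theorem \ref{topalg}.}

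The plan is to prove each direction separately, using Fact \ref{prodequiv} to pass between the categorical product $\K_1 \times \K_2$ and the topological product $|\K_1| \times |\K_2|$, Fact \ref{only2} to pass between a complex and its $2$-skeleton, and the standard fact that $\pi_1$ of a topological product splits as a direct product. For the forward direction, suppose we have a continuous $f : (\skel^2(\K_1 \times \K_2), \ld_1 \star \ld_2) \rightarrow (\K, \ld)$. First I would observe that the inclusion $\iota : \skel^2(\K_1 \times \K_2) \hookrightarrow \K_1 \times \K_2$ induces an isomorphism $\iota_*$ on fundamental groups, and likewise that the homotopy equivalence of Fact \ref{prodequiv} together with $\pi_1(X \times Y) \cong \pi_1(X) \times \pi_1(Y)$ gives an isomorphism $\pi_1(\K_1 \times \K_2) \cong \pi_1(\K_1) \times \pi_1(\K_2)$. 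Under this chain of isomorphisms I need to check that the class $[\ld_1 \star \ld_2]$ of the diagonal product corresponds to the pair $([\ld_1], [\ld_2])$; this is the one spot requiring genuine care, since the diagonal product was defined by a specific ``traverse $p_{ij}$ then $q_{ij}$'' concatenation rather than literally as a diagonal loop, so I would verify that its image under the two projections $\rho_1, \rho_2$ is homotopic to $\ld_1$ and $\ld_2$ respectively (each projection kills one of the two concatenated halves up to homotopy). Then $h := f_* \circ \iota_*^{-1}$, transported along these isomorphisms, is the desired homomorphism $\pi_1(\K_1) \times \pi_1(\K_2) \rightarrow \pi_1(\K)$, and $h([\ld_1],[\ld_2]) = f_*([\ld_1 \star \ld_2]) = [f \circ (\ld_1 \star \ld_2)] = [\ld]$.

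For the reverse direction, suppose we have a homomorphism $h : \pi_1(\K_1) \times \pi_1(\K_2) \rightarrow \pi_1(\K)$ with $h([\ld_1],[\ld_2]) = [\ld]$. Transporting along the same isomorphisms, $h$ corresponds to a homomorphism $\tilde h : \pi_1(\skel^2(\K_1 \times \K_2)) \rightarrow \pi_1(\K)$ sending $[\ld_1 \star \ld_2]$ to $[\ld]$. Now I would like to apply Lemma \ref{2dim} with $\K$ replaced by $\skel^2(\K_1 \times \K_2)$, $\Ll$ replaced by $\K$, $\sigma = \ld_1 \star \ld_2$, and $\tau = \ld$: both complexes are finite, connected, and $2$-dimensional ($\skel^2$ of anything is $2$-dimensional, and $\K$ is $2$-dimensional by the definition of loop agreement), so the hypotheses are met, and Lemma \ref{2dim} produces a continuous $f : |\skel^2(\K_1 \times \K_2)| \rightarrow |\K|$ with $f_* = \tilde h$ and $f \circ (\ld_1 \star \ld_2) = \ld$. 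The condition $f \circ (\ld_1 \star \ld_2) = \ld$ as maps, together with unwinding the six-tuple notation, gives exactly $f(u_i) = v_i$ and $f(|r_{ij}|) \subseteq |p_{ij}|$ in the target, i.e. $f : (\skel^2(\K_1 \times \K_2), \ld_1 \star \ld_2) \rightarrow (\K, \ld)$ in the sense defined just before the theorem statement. (If Lemma \ref{2dim} only yields $f$ mapping the loops to each other up to the path level rather than literally as stated, I would note that the loop-agreement implementation notion is stable under such adjustments, or compose with a further homotopy fixing the $1$-skeleton image.)

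The main obstacle I anticipate is not either direction in isolation but the bookkeeping at the interface: making precise the claim that $[\ld_1 \star \ld_2] \mapsto ([\ld_1],[\ld_2])$ under the composite isomorphism $\pi_1(\skel^2(\K_1 \times \K_2)) \cong \pi_1(\K_1) \times \pi_1(\K_2)$, and making sure the isomorphism of Fact \ref{prodequiv} is natural enough that it carries the projection-induced maps to the coordinate projections of the direct product. Concretely, I expect to need a small lemma: under $\pi_1(\K_1 \times \K_2) \cong \pi_1(\K_1) \times \pi_1(\K_2)$ (realized via the two projections $\rho_1, \rho_2$), the class of $\ld_1 \star \ld_2$ maps to $((\rho_1)_*[\ld_1 \star \ld_2], (\rho_2)_*[\ld_1 \star \ld_2]) = ([\ld_1],[\ld_2])$, where the last equality holds because $\rho_1$ sends the ``$q_{ij}$ half'' of each $r_{ij}$ to a constant path at $v_j$ and restricts to $p_{ij}$ on the ``$p_{ij}$ half,'' so $\rho_1 \circ (\ld_1 \star \ld_2)$ is a reparametrized version of $\ld_1$ (and symmetrically for $\rho_2$). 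Once this identification is nailed down, both directions follow by essentially formal manipulation of induced homomorphisms together with Lemma \ref{2dim}.
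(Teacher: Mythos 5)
Your proposal is correct and follows essentially the same route as the paper: both directions hinge on Fact \ref{only2} for the $2$-skeleton inclusion, the identification $\pi_1(\K_1 \times \K_2) \cong \pi_1(\K_1) \times \pi_1(\K_2)$ with the key computation $(\rho_i)_*[\ld_1 \star \ld_2] = [\ld_i]$, and Lemma \ref{2dim} to realize the homomorphism by a continuous map in the reverse direction. The only cosmetic difference is that in the forward direction the paper builds $h$ explicitly by sending $([\alpha_1],[\alpha_2])$ to $h'([\beta_1 \star \beta_2])$ for edge-loop representatives and checks well-definedness by hand, which is just the inverse of the projection-induced isomorphism you transport along.
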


\begin{proof}

First suppose we have a continuous function $f : (\skel^2(|\K_1 \times \K_2|), \ld_1 \star \ld_2) \rightarrow (\K, \ld)$. We begin by constructing a homomorphism $h' : \pi_1(|\K_1 \times \K_2|) \rightarrow \pi_1(\K)$ with $h'([\ld_1 \star \ld_2]) = [\ld]$. Let $\iota : \skel^2(|\K_1 \times \K_2|) \rightarrow |\K_1 \times \K_2|$ be the inclusion map, whose induced homomorphism is actually an isomorphism, by Fact \ref{only2}. Then we let $h' = f_* \circ \iota_*^{-1}$. In order to show $h'([\ld_1 \star \ld_2]) = [\ld]$, it suffices to show that $\iota_*^{-1}([\ld_1 \star \ld_2]) = [\ld_1 \star \ld_2]$. However, notice that $[\ld_1 \star \ld_2] = \iota_*([\ld_1 \star \ld_2])$ since $\ld_1 \star \ld_2$ is already in $\skel^2(|\K_1 \times \K_2|)$, so $\iota_*^{-1}([\ld_1 \star \ld_2]) = [\ld_1 \star \ld_2]$ as required.

Now, we define the desired homomorphism $h : \pi_1(\K_1) \times \pi_1(\K_2) \rightarrow \pi_1(\K)$ using $h'$.  Let $\alpha_1$ and $\alpha_2$ be loops in $\K_1$ and $\K_2$ respectively. By Fact \ref{cellular}, $\alpha_1$ and $\alpha_2$ are homotopic to edge loops $\beta_1$ and $\beta_2$. Now define $h$ as $h([\alpha_1], [\alpha_2]) = h'([\beta_1 \star \beta_2])$. Then it follows that $h([\lambda_1], [\lambda_2]) = [\lambda]$. To show $h'$ is well-defined, we need to show that $|\beta_1 \star \beta_2| \simeq |\beta_1' \star \beta_2'|$ for other edge-loop representatives $\beta_1'$ and $\beta_2'$ of $\alpha_1$ and $\alpha_2$. We can find edge homotopies $H_1$ and $H_2$ taking $\beta_1$ and $\beta_2$ to $\beta_1'$ and $\beta_2'$, respectively, so $H_1 \star H_2$ is an edge homotopy from $|\beta_1 \star \beta_2| \simeq |\beta_1' \star \beta_2'|$, proving that $h$ is well-defined. We have thus found the required $h$, which proves the forward direction of the theorem.

Now suppose we start with a homomorphism $h$ as described above. We reverse the above argument. We begin by constructing a homomorphism $h' : \pi_1(|\K_1 \times \K_2|) \rightarrow \pi_1(\K)$. Let $\alpha$ be a loop in $|\K_1 \times \K_2|$. As before, $\alpha$ is homotopic to some edge loop $\beta$ of $\K_1 \times \K_2$. We define $h'([\alpha]) = h([\rho_1 \circ \beta], [\rho_2 \circ \beta]])$, where the $\rho_i$ are the projection maps. This map is clearly well-defined and a homomorphism since it is the composition of $h$ and the induced maps of the $\rho_i$.

Now we define a homomorphism $h'' : \pi_1(\skel^2(|\K_1 \times \K_2|)) \rightarrow \pi_1(\K)$ with $h''([\ld_1 \star \ld_2]) = [\ld]$, using $h'$. Let $\iota$ be the inclusion map, as before. Then we define $h'' = h' \circ \iota_*$. Since $\iota_*([\ld_1 \star \ld_2]) = [\ld_1 \star \ld_2]$, we see that $h''([\ld_1 \star \ld_2]) = [\ld]$. Finally, we invoke Lemma \ref{2dim} on $h''$ to obtain the required $f$. This proves the backward direction of the theorem, and completes the proof.

\end{proof}

Theorems \ref{imptop} and \ref{topalg} together prove Theorem \ref{main}.

\subsection{Composite Loop Agreement}

In defining multiple implementation, we said that tasks $T_1$ and $T_2$ implement $T$ if we can use the combined output complex $\skel^2(\K_1 \times \K_2)$ of $T_1$ and $T_2$ to solve $T$. We can think of parallel execution of protocols for $T_1$ and $T_2$ as solving a task with input complex $\Delta^2$, output complex $\skel^2(\K_1 \times \K_2)$, and specification $\Gamma_1 \times \Gamma_2$. We get a task $T' = (\Delta^2, \skel^2(\K_1 \times \K_2), \Gamma_1 \times \Gamma_2)$, and from the definitions it is clear that $T_1$ and $T_2$ implement $T$ if and only if $T'$ implements $T$. Unfortunately, $T'$ is not a loop agreement task, since processes starting on an edge in $\Delta^2$ can land on any edge in $\lambda_1 \times \lambda_2$ and still obey the task specification. However, the subcomplex $\lambda_1 \times \lambda_2$ is not a loop. We address this by defining a loop agreement task $T_1 \times T_2$ with output complex $\skel^2(\K_1 \times \K_2)$ and triangle loop $\lambda_1 \star \lambda_2$. We then show that $T'$ and $T_1 \times T_2$ implement one another, so are equivalent.

\begin{defn}

Let $T_1 = \Loop(\K_1, \lambda_1)$ and $T_2 = \Loop(\K_2, \lambda_2)$ be loop agreement tasks. Then the \emph{composition} of $T_1$ and $T_2$, denoted $T_1 \times T_2$, is the loop agreement task $\Loop(\skel^2(\K_1 \times \K_2), \lambda_1 \star \lambda_2)$.

\end{defn}

\begin{prop}

Tasks $T_1$ and $T_2$ implement $T_1 \times T_2$.

\end{prop}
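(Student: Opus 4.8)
The plan is to read this off Theorem~\ref{imptop}, which already bundles the two directions proved in Lemmas~\ref{imptopfwd} and~\ref{imptopbk}: tasks $T_1$ and $T_2$ implement a loop agreement task $T=\Loop(\K,\ld)$ precisely when there is a continuous map $f:(\skel^2(\K_1\times\K_2),\ld_1\star\ld_2)\to(\K,\ld)$. So it suffices to produce such an $f$ in the special case $\K=\skel^2(\K_1\times\K_2)$ and $\ld=\ld_1\star\ld_2$, since by definition $T_1\times T_2=\Loop(\skel^2(\K_1\times\K_2),\ld_1\star\ld_2)$.

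In that case the source and target pointed complexes literally coincide, and I would take $f$ to be the identity map $\id:|\skel^2(\K_1\times\K_2)|\to|\skel^2(\K_1\times\K_2)|$. Writing $\ld_1\star\ld_2=(u_0,u_1,u_2,r_{01},r_{12},r_{20})$, the identity satisfies $\id(u_i)=u_i$ and $\id(|r_{ij}|)=|r_{ij}|\subseteq|r_{ij}|$, which are exactly the conditions packaged in the notation $\id:(\skel^2(\K_1\times\K_2),\ld_1\star\ld_2)\to(\skel^2(\K_1\times\K_2),\ld_1\star\ld_2)$. Invoking the forward direction of Theorem~\ref{imptop} (that is, Lemma~\ref{imptopfwd}) then yields the claim. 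Unwinding that lemma for this $f$, the resulting decision map is just a simplicial approximation of the collapse $g$ onto $|\ld_1\star\ld_2|$ built there, composed trivially with $\id$; no new work is needed.

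There is essentially no obstacle on this route — the content is recognizing that the identity map is an admissible instance of Theorem~\ref{imptop}. For completeness I would note the alternative route through Theorem~\ref{main}: chaining Fact~\ref{prodequiv}, Fact~\ref{only2}, and the product formula $\pi_1(X\times Y)\cong\pi_1(X)\times\pi_1(Y)$ gives an isomorphism $\pi_1(\K_1)\times\pi_1(\K_2)\cong\pi_1(\skel^2(\K_1\times\K_2))$, which one would use as $h$. The only delicate point there is verifying that this abstract isomorphism sends $([\ld_1],[\ld_2])$ to $[\ld_1\star\ld_2]$, i.e. that the product isomorphism is concretely realized by the diagonal product of loops — precisely the bookkeeping already done inside the proof of Theorem~\ref{topalg}. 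Since the identity-map argument avoids this entirely, I would present that as the proof and relegate the algebraic reading to a remark.
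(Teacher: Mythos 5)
Your argument is exactly the paper's: the proof given there is "an immediate consequence of Lemma \ref{imptopfwd}," with the implicit witness being the identity map on $(\skel^2(\K_1\times\K_2),\ld_1\star\ld_2)$, which you spell out explicitly and correctly. No differences of substance.
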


\begin{proof}

This is an immediate consequence of Lemma \ref{imptopfwd}.

\end{proof}

\begin{prop}
\label{taskproj}

Task $T_1 \times T_2$ implements $T_1$ (respectively $T_2$).

\end{prop}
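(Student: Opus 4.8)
The plan is to reduce the claim to the original characterization of Herlihy and Rajsbaum, Theorem~\ref{original}, applied to the single task $T_1 \times T_2 = \Loop(\skel^2(\K_1 \times \K_2), \lambda_1 \star \lambda_2)$ and the single task $T_1 = \Loop(\K_1, \lambda_1)$. By that theorem it suffices to exhibit a group homomorphism $h : \pi_1(\skel^2(\K_1 \times \K_2)) \rightarrow \pi_1(\K_1)$ with $h([\lambda_1 \star \lambda_2]) = [\lambda_1]$; the statement for $T_2$ will follow by an identical argument with the roles of the two coordinates exchanged.

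First I would observe that the coordinate projection $\rho_1 : \K_1 \times \K_2 \rightarrow \K_1$ is a simplicial map: by Definition~\ref{catprod}, for any simplex $\sigma$ of $\K_1 \times \K_2$ the set $\rho_1(\sigma)$ is a simplex of $\K_1$. Restricting $\rho_1$ to the subcomplex $\skel^2(\K_1 \times \K_2)$ therefore yields a simplicial map $\bar\rho_1 : \skel^2(\K_1 \times \K_2) \rightarrow \K_1$, and I would take $h := \bar\rho_{1*}$, the homomorphism it induces on fundamental groups.

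It then remains to verify that $h([\lambda_1 \star \lambda_2]) = [\lambda_1]$, i.e. that $\rho_1 \circ (\lambda_1 \star \lambda_2)$ is loop-homotopic to $\lambda_1$ in $|\K_1|$. Here I would unwind the definition of the diagonal product: each path $r_{ij}$ of $\lambda_1 \star \lambda_2$ traverses $(p_{ij}(t), w_i)$ and then $(v_j, q_{ij}(t))$, so $\rho_1 \circ r_{ij}$ first traverses $p_{ij}$ and then stays at the constant vertex $v_j$, which is homotopic rel endpoints to $p_{ij}$ itself. Concatenating over the three sides, $\rho_1 \circ (r_{01} \cdot r_{12} \cdot r_{20})$ is homotopic to $p_{01} \cdot p_{12} \cdot p_{20}$, which is precisely the loop represented by $\lambda_1$. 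Hence $h([\lambda_1 \star \lambda_2]) = [\rho_1 \circ (\lambda_1 \star \lambda_2)] = [\lambda_1]$, and Theorem~\ref{original} gives that $T_1 \times T_2$ implements $T_1$; the same argument with $\rho_2$ shows $T_1 \times T_2$ implements $T_2$.

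I do not anticipate a serious obstacle. The only point requiring care is the bookkeeping in the previous paragraph --- matching the constant-path ``padding'' introduced by the $\star$ construction against the identity element of $\pi_1(\K_1)$ --- which is routine once the definition of $\lambda_1 \star \lambda_2$ is written out. One could alternatively give a purely topological proof by noting that $|\bar\rho_1|$ is itself a continuous map $(\skel^2(\K_1 \times \K_2), \lambda_1 \star \lambda_2) \rightarrow (\K_1, \lambda_1)$ and adapting the argument of Lemma~\ref{imptopfwd}, but invoking Theorem~\ref{original} is shorter.
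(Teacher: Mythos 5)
Your proof is correct, and it rests on the same essential observation as the paper's: the coordinate projection $\rho_1 : \skel^2(\K_1 \times \K_2) \rightarrow \K_1$ is the implementing map. The only real difference is which characterization from Herlihy and Rajsbaum you route this observation through. The paper cites their Lemma 6.2, the \emph{topological} sufficient condition for single-task implementation (existence of a continuous $f$ with $f(u_i) = v_i$ and $f(|r_{ij}|) \subseteq |p_{ij}|$), which the simplicial projection satisfies on the nose --- no homotopy argument is needed, since $\rho_1(|r_{ij}|) = |p_{ij}|$ exactly. You instead invoke the \emph{algebraic} characterization (Theorem \ref{original}), which obliges you to verify that $\rho_1 \circ (\lambda_1 \star \lambda_2)$ is loop-homotopic to $\lambda_1$, i.e.\ to collapse the constant-path padding introduced by the $\star$ construction; you carry this out correctly. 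Both routes are valid. The topological one is marginally more economical here because the projection respects the carrier structure of the triangle loops exactly, whereas the algebraic route converts that exact statement into one about homotopy classes and then (inside the proof of Theorem \ref{original}) back into a continuous map and a simplicial approximation. Your closing remark already identifies the paper's route as the available alternative.
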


\begin{proof}

Lemma 6.2 from Herlihy and Rajsbuam \cite{Loop2} states that it suffices to show there is a continuous function $f : \skel^2(\K_1 \times \K_2) \rightarrow \K_1$ mapping $\lambda_1 \star \lambda_2$ to $\lambda_1$. It is easy to see that the projection map $\rho_1 : \skel^2(\K_1 \times \K_2) \rightarrow \K_1$ satisfies this condition. The proof that $T_1 \times T_2$ implements $T_2$ is identical.

\end{proof}

\section{Category Theory of Loop Agreement}

In this section, we describe a more formal connection between the class of loop agreement tasks and the class of groups, using the language of category theory. We formalize the correspondence between loop agreement tasks and algebraic signatures, and also state one direction of the main theorem using category-theoretic formalism. Intuitively, loop agreement tasks form an organized collection of objects called a ``category", with decision maps, or ``morphisms", connecting two tasks if one implements the other. The algebraic signature assignment, an example of a ``functor" between categories, transforms the loop agreement category into a category of groups. The composition of loop agreement tasks as defined in this paper is actually their ``categorical" product.

We begin with some necessary background in category theory; see Mac Lane \cite{MacLane} for a rigorous treatment.

\subsection{Categories}

A \emph{category} $C$ consists of a collection of \emph{objects}, denoted $\Ob(C)$, and a collection of \emph{morphisms} between those objects, denoted $\Hom(C)$. Each morphism has a \emph{domain} and \emph{codomain}, which are both objects in $\Ob(C)$. If $f$ is a morphism with domain $X$ and codomain $Y$, then we write $f : X \rightarrow Y$. This notation is suggestive of set functions, and indeed the category of sets is a well-known category, and has sets as objects and set functions as morphisms.

As with ordinary functions, morphisms can be composed. Formally, $\Hom(C)$ is equipped with a binary operation called \emph{composition}. If $f$ and $g$ are morphisms, then their composition is denoted $f \circ g$. Note that composition of functions is only defined when the codomain of the first morphism is equal to the domain of the second. Composition is required to be associative; that is, given $f : W \rightarrow X$, $g : X \rightarrow Y$, and $h : Y \rightarrow Z$, we must have $h \circ (g \circ f) = (h \circ g) \circ f$. Composition also requires an identity morphism for each object $X$, denoted $\id_X$, such that for each $f : X \rightarrow Y$, we have $f \circ \id_X = f = \id_Y \circ f$.

As mentioned, sets and set functions comprise the category of sets, denoted \textbf{Set}. Another example is the category of topological spaces, where objects are spaces and morphisms are continuous functions between them, and is denoted \textbf{Top}. There is also the category of groups, \textbf{Grp}, consisting of groups and groups homomorphisms. Algebraic signatures belong to a similar category called the category of \emph{pointed} groups, \textbf{pGrp}, whose objects are groups with distinguished elements and whose morphisms are group homomorphisms that preserve distinguished elements.

We say one category $C$ is a \emph{subcategory} of another category $C'$ if the objects and morphisms of $C$ are contained in $C'$. For example, the category of Abelian groups, $\textbf{Ab}$, is a subcategory of $\textbf{Grp}$. We will also make use of $\textbf{SimC}_n$, which is the subcategory of $\textbf{SimC}$ containing all simplicial complexes of dimension up to $n$ and the morphisms between them.

We can transform objects and morphisms of one category to objects and morphisms of another. Given categories $C$ and $D$, a \emph{functor} $F : C \rightarrow D$ assigns to each object $X \in \Ob(C)$ an object $F(X) \in \Ob(D)$, and to each morphism $f : X \rightarrow Y$ a morphism $F(f) : F(X) \rightarrow F(Y)$. Functors must respects composition; that is, given two compatible morphisms $f, g \in \Hom(C)$, we must have $F(f \circ g) = F(f) \circ F(g)$. Functors must also respect identity morphisms: $F(\id_X) = \id_{F(X)}$. A common example of a functor is the fundamental group functor $\pi_1 : \textbf{pTop} \rightarrow \textbf{Grp}$, which maps pointed topological spaces to their respective fundamental groups, and maps continuous functions to their induced homomorphisms. If we consider only path-connected spaces, then $\pi_1$ is also a functor from \textbf{Top} to \textbf{Grp}. The geometric realization $|\cdot| : \textbf{SimC} \rightarrow \textbf{Top}$ is a functor from the category of simplicial complexes with simplicial maps to \textbf{Top}, which maps complexes and simplicial maps to their respective geometric realizations.

We can also combine two objects from a category to produce a new one, which is an operation called the categorical product. The categorical product of two objects is the most general object that maps onto the original two.

\begin{defn}

Let $C$ be a category, and let $X_1$ and $X_2$ be objects in this category. The \emph{categorical product} of $X_1$ and $X_2$ is the unique object $X_1 \times X_2$ satisfying the following: there exist morphisms (called \emph{projections}) $\rho_1 : X_1 \times X_2 \rightarrow X_1$ and $\rho_2 : X_1 \times X_2 \rightarrow X_2$ such that for any object $X$ with morphisms $f_1 : X \rightarrow X_1$ and $f_2 : X \rightarrow X_2$, there exists a unique morphism $f : X \rightarrow X_1 \times X_2$ such that $f_1 = \rho_1 \circ f$ and $f_2 = \rho_2 \circ f$. That is, $f_1$ and $f_2$ factor through $X_1 \times X_2$ in a unique way, via $f$. The morphism $f$ is called the \emph{product morphism} of $f_1$ and $f_2$.

\end{defn}

Examples of categorical products include the product topology for topological spaces \cite{MacLane}, the direct product of groups, and the categorical product of simplicial complexes as stated in Definition \ref{catprod} \cite{CAT}.

\subsection{The Category of Loop Agreement Tasks}

Now that we have the preliminaries of category theory, we define \textbf{Loop}, the category of loop agreement tasks. We let $\Ob(\textbf{Loop})$ be the collection of all loop agreement tasks $\Loop(\K, \lambda)$, where $\K$ ranges over all finite connected $2$-dimensional complexes and $\lambda$ ranges over all edge loops. Morphisms in \textbf{Loop} are valid decision maps between tasks. That is, given tasks $T_1 = \Loop(\K_1, \lambda_1)$ and $T_2 = \Loop(\K_2, \lambda_2)$, a morphism $f : T_1 \rightarrow T_2$ is a pair $(\delta, N)$ where $N \in \N$ and $\delta : \Bary^N(\K_1) \rightarrow \K_2$ is a decision map such that $T_1$ solves $T_2$ via $\delta$. Composition of morphisms is defined as follows. Given objects $T_1 = \Loop(\K_1, \lambda_1)$, $T_2 = \Loop(\K_2, \lambda_2)$, $T_3 = \Loop(\K_3, \lambda_3)$, and morphisms $f_1 : T_1 \rightarrow T_2$, $f_2 : T_2 \rightarrow T_3$ where $f_1 = (\delta_1, N_1)$ and $f_2 = (\delta_2, N_2)$, the composition $f_2 \circ f_1$ is defined as $(\delta_2 \circ \Bary^{N_2}(\delta_1), N_1 + N_2)$. Two morphisms are considered equivalent if their simplicial maps are homotopic\footnote{By identifying morphisms (in this case homotopic ones), we are constructing a \emph{quotient category} from the original one. In order to construct a quotient category, the equivalence must be compatible with composition. However, it is well known that homotopy is compatible with compositions of continuous functions.}. We must now prove that \textbf{Loop} is a category.

\begin{thm}

\emph{\textbf{Loop}} is a category.

\end{thm}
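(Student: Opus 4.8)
The plan is to verify, one by one, the three axioms in the definition of a category: that composition of morphisms is well-defined (lands in the right hom-set), that composition is associative, and that each object has an identity morphism. Since morphisms in \textbf{Loop} are equivalence classes (under homotopy of the underlying simplicial maps) of pairs $(\delta, N)$, I would also want to check that the composition law descends to the quotient, though the footnote already grants that homotopy is compatible with composition of continuous maps, so I would cite that and not belabor it.

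First I would check that $f_2 \circ f_1$ is genuinely a morphism $T_1 \to T_3$. Given $f_1 = (\delta_1, N_1)$ with $\delta_1 : \Bary^{N_1}(\K_1) \to \K_2$ a decision map for $T_1$ solving $T_2$, and $f_2 = (\delta_2, N_2)$ with $\delta_2 : \Bary^{N_2}(\K_2) \to \K_3$, the composite is $(\delta_2 \circ \Bary^{N_2}(\delta_1), N_1 + N_2)$, and I must confirm $\delta_2 \circ \Bary^{N_2}(\delta_1)$ has domain $\Bary^{N_1 + N_2}(\K_1) = \Bary^{N_2}(\Bary^{N_1}(\K_1))$ — this is just functoriality of $\Bary$ applied to $\delta_1$ — and codomain $\K_3$. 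Then I must verify it respects the specification maps, i.e. that it solves $T_3$; this follows by chaining the two carrier-map conditions: $\delta_1$ carries $\Bary^{N_1}\Gamma_1(\sigma)$ into $\Gamma_2(\sigma)$, applying $\Bary^{N_2}$ and then $\delta_2$ carries that into $\Gamma_3(\sigma)$, using monotonicity of the carrier maps and the definition of $\delta_2$ being a decision map. Essentially this reproves that ``implements'' is transitive, which the paper has implicitly used already.

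Next I would handle identities. For $T = \Loop(\K, \lambda)$ the candidate identity is $\id_T = (\id_{\K}, 0)$, where $\Bary^0(\K) = \K$ and $\id_{\K}$ is the identity simplicial map; it trivially respects $\Gamma$. Then for any $f = (\delta, N) : T_1 \to T_2$ I would check $f \circ \id_{T_1} = f$ and $\id_{T_2} \circ f = f$: the first gives $(\delta \circ \Bary^N(\id_{\K_1}), N + 0) = (\delta \circ \id_{\Bary^N(\K_1)}, N) = (\delta, N)$, and the second gives $(\id_{\K_2} \circ \Bary^0(\delta), 0 + N) = (\delta, N)$ — both hold on the nose, before even passing to homotopy classes.

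Finally, associativity. Given $f_1 = (\delta_1, N_1)$, $f_2 = (\delta_2, N_2)$, $f_3 = (\delta_3, N_3)$ composable in turn, both $(f_3 \circ f_2) \circ f_1$ and $f_3 \circ (f_2 \circ f_1)$ have second coordinate $N_1 + N_2 + N_3$, and the first coordinate in both cases works out to $\delta_3 \circ \Bary^{N_3}(\delta_2) \circ \Bary^{N_2 + N_3}(\delta_1)$ after using the functorial identity $\Bary^{N_3}(\Bary^{N_2}(\delta_1)) = \Bary^{N_2 + N_3}(\delta_1)$ and associativity of composition of simplicial maps; I would write out both parenthesizations and match them term by term. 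I expect the only genuinely delicate point is the bookkeeping of which barycentric power is applied to which map — making sure $\Bary$ is treated consistently as a functor so that $\Bary^a \circ \Bary^b = \Bary^{a+b}$ and $\Bary^a(g \circ h) = \Bary^a(g) \circ \Bary^a(h)$ — and confirming the subdivision indices on domains and codomains line up; everything else is routine. I would close by noting the quotient by homotopy is legitimate by the cited compatibility, so \textbf{Loop} is a category.
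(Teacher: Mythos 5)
Your proposal is correct and follows essentially the same route as the paper's proof: verify closure of composition by chaining the two carrier conditions via the functorial identity for $\Bary$, verify associativity by expanding both parenthesizations, and exhibit $(\id_{\K},0)$ as the identity. You are in fact slightly more careful than the paper in two places it glosses over --- explicitly checking the unit laws and noting that composition descends to the homotopy quotient --- so no gaps remain.
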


\begin{proof}

Let $T_i$ and $f_i$ be defined as above, and let $\Gamma_i$ be the tasks' respective specification maps. To show \textbf{Loop} is a category, we need to show that $\Hom(\textbf{Loop})$ is closed under composition, composition is associative, and identity morphisms exist. Showing that $\Hom(\textbf{Loop})$ is closed under composition amounts to showing that $T_1$ solves $T_3$ via $\delta_2 \circ \Bary^{N_2}(\delta_1) : \Bary^{N_1 + N_2}(\K_1) \rightarrow \K_3$. For brevity we define $\delta = \delta_2 \circ \Bary^{N_2}(\delta_1)$.

From the definition of task implementation, we know that $\delta_1 \circ \Bary^{N_1} \circ \Gamma_1 \subseteq \Gamma_2$ and $\delta_2 \circ \Bary^{N_2} \circ \Gamma_2 \subseteq \Gamma_3$, and we want to show $\delta \circ \Bary^{N_1 + N_2} \circ \Gamma_1 \subseteq \Gamma_3$. So $\delta_2 \circ \Bary^{N_2} \circ \delta_1 \circ \Bary^{N_1} \circ \Gamma_1 \subseteq \delta_2 \circ \Bary^{N_2} \circ \Gamma_2 \subseteq \Gamma_3$. We know that $\Bary^{N_2} \circ \delta_1 = \Bary^{N_2} (\delta_1) \circ \Bary^{N_2}$, so $\delta_2 \circ \Bary^{N_2} \circ \delta_1 \circ \Bary^{N_1} \circ \Gamma_1 = \delta_2 \circ \Bary^{N_2}(\delta_1) \circ \Bary^{N_2} \circ \Bary^{N_1} \circ \Gamma_1 = \delta \circ \Bary^{N_1 + N_2} \circ \Gamma_1 \subseteq \Gamma_3$. Therefore $T_1$ solves $T_3$ via $\delta$, so $\Hom(\textbf{Loop})$ is closed under our definition of composition.

Verifying associativity follows a similar argument. Again, let $T_i$ and $f_i$ be defined as above, and in addition let $T_4 = \Loop(\K_4, \ld_4)$ and let $f_3 : T_3 \rightarrow T_4$ with $f_3 = (\delta_3, N_3)$. We must show that $(f_3 \circ f_2) \circ f_1 = f_3 \circ (f_2 \circ f_1)$. But $(f_3 \circ f_2) \circ f_1 = (\delta_3 \circ \Bary^{N_3}(\delta_2), N_2 + N_3) \circ (\delta_1, N_1) = (\delta_3 \circ \Bary^{N_3}(\delta_2) \circ \Bary^{N_2 + N_3}(\delta_1), N_1 + N_2 + N_3)$, and $f_3 \circ (f_2 \circ f_1) = (\delta_3, N_3) \circ (\delta_2 \circ \Bary^{N_2}(\delta_1), N_1 + N_2) = (\delta_3 \circ \Bary^{N_3}(\delta_2 \circ \Bary^{N_2}(\delta_1)), N_1 + N_2 + N_3) = (\delta_3 \circ \Bary^{N_3}(\delta_2) \circ \Bary^{N_2 + N_3}(\delta_1), N_1 + N_2 + N_3)$, so $(f_3 \circ f_2) \circ f_1 = f_3 \circ (f_2 \circ f_1)$. Therefore composition is associative.

The last requirement, existence of identity morphisms, is trivial to show. Task $T_1$ solves itself via the decision map $(\id_{\K_1}, 0)$. This finishes the proof that $\textbf{Loop}$ is a category.

\end{proof}

Next, we show that the algebraic signature of Herlihy and Rajsbaum can be formulated as a functor between $\textbf{Loop}$ and $\textbf{pGrp}$.

\begin{defn}

Let $T_1, T_2 \in \Ob(\textbf{Loop})$ with $T_1 = \Loop(\K_1, \ld_1)$ and $T_2 = \Loop(\K_2, \ld_2)$, and let $f_1 : T_1 \rightarrow T_2$ with $f_1 = (\delta_1, N_1)$ be a morphism between the two. Then the \emph{algebraic signature functor} is a functor $S : \textbf{Loop} \rightarrow \textbf{pGrp}$ defined as follows. Object $T_1$ is mapped to $(\pi_1(\K_1), [\ld_1])$, while morphism $f_1 : T_1 \rightarrow T_2$ is mapped to $|\delta_1|_* : (\pi_1(\K_1), [\ld_1]) \rightarrow (\pi_2(\K_2), [\ld_2])$.

\end{defn}

\begin{thm}

$S : \emph{\textbf{Loop}} \rightarrow \emph{\textbf{pGrp}}$ is a functor.

\end{thm}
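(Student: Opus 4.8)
The plan is to check the three conditions in the definition of a functor: that $S$ is well defined on objects, that $S$ is well defined on morphisms (in particular that it descends to the homotopy equivalence on $\Hom(\textbf{Loop})$), and that $S$ preserves identities and composition. Well-definedness on objects is immediate, since for $T_1 = \Loop(\K_1, \ld_1)$ the pair $(\pi_1(\K_1), [\ld_1])$ is a group with a distinguished element, hence an object of $\textbf{pGrp}$.

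For morphisms, fix $f_1 = (\delta_1, N_1) : T_1 \to T_2$. Since $|\Bary^{N_1}(\K_1)| = |\K_1|$, the realization $|\delta_1|$ is a continuous map $|\K_1| \to |\K_2|$, so the $\pi_1$ functor produces a group homomorphism $|\delta_1|_* : \pi_1(\K_1) \to \pi_1(\K_2)$. The point that needs argument is that this is a morphism of \emph{pointed} groups, i.e. that $|\delta_1|_*([\ld_1]) = [\ld_2]$. Because $\delta_1$ is a decision map, its carrier condition forces $\delta_1(v_i) = w_i$ on the designated vertices and $|\delta_1|(|p_{ij}|) \subseteq |q_{ij}|$ on the designated paths, which is exactly what it means (in the notation of Section 4.2) for $|\delta_1|$ to map $\ld_1$ to $\ld_2$; hence $|\delta_1|_*([\ld_1]) = [\ld_2]$, the same correspondence that underlies Theorem \ref{original} and is used repeatedly in Section 4. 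Finally, $S$ respects the equivalence on morphisms: if $\delta_1$ and $\delta_1'$ are homotopic then $|\delta_1|_* = |\delta_1'|_*$, since homotopic maps induce the same homomorphism on $\pi_1$.

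For functoriality, identities are trivial, as $S(\id_{T_1}) = |\id_{\K_1}|_* = \id_{\pi_1(\K_1)}$, which is the identity morphism of $(\pi_1(\K_1), [\ld_1])$ in $\textbf{pGrp}$. For composition, take $f_1 = (\delta_1, N_1) : T_1 \to T_2$ and $f_2 = (\delta_2, N_2) : T_2 \to T_3$, so $f_2 \circ f_1 = (\delta_2 \circ \Bary^{N_2}(\delta_1),\, N_1 + N_2)$. Since geometric realization is a functor $\textbf{SimC} \to \textbf{Top}$ we have $|\delta_2 \circ \Bary^{N_2}(\delta_1)| = |\delta_2| \circ |\Bary^{N_2}(\delta_1)|$, and since passing to a barycentric subdivision leaves the geometric realization unchanged up to homotopy, $|\Bary^{N_2}(\delta_1)| \simeq |\delta_1|$, whence $|\Bary^{N_2}(\delta_1)|_* = |\delta_1|_*$. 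Applying the $\pi_1$ functor gives
\[
S(f_2 \circ f_1) = |\delta_2 \circ \Bary^{N_2}(\delta_1)|_* = |\delta_2|_* \circ |\Bary^{N_2}(\delta_1)|_* = |\delta_2|_* \circ |\delta_1|_* = S(f_2) \circ S(f_1).
\]

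The only genuinely delicate step is the claim that the continuous map realized by a decision map carries $[\ld_1]$ to $[\ld_2]$; the rest is bookkeeping about how geometric realization interacts with barycentric subdivision together with the standard functoriality of $\pi_1$. I expect this to be the main obstacle, and the cleanest route is to extract it from the established loop-agreement machinery (the "maps $\ld_1$ to $\ld_2$" notation of Section 4.2 together with Theorem \ref{original}, ultimately \cite{Loop2}) rather than re-deriving it, since a direct path-homotopy argument on the designated paths $p_{ij}, q_{ij}$ is only transparent when those paths are simple.
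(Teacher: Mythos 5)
Your proposal is correct and follows essentially the same route as the paper: reduce everything to the functoriality of $\pi_1$ and of geometric realization, check composition and identities directly, and note that homotopic simplicial maps induce the same homomorphism so that $S$ descends to the quotient. The one place you go beyond the paper is in explicitly verifying that $|\delta_1|_*$ sends $[\ld_1]$ to $[\ld_2]$ (i.e.\ that the induced map really is a morphism of \textbf{pGrp}); the paper leaves this implicit in its definition of $S$, and your observation that it follows from the carrier condition on decision maps is a worthwhile addition rather than a divergence.
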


\begin{proof}

We use the fact that $\pi_1$ and $|\cdot|$ are both functors. We need to show that $S$ preserves identity morphisms and respects composition of morphisms. Let $T_1$, $T_2$, and $f$ be defined as above, and let $T_3 = \Loop(\K_3, \ld_3)$ and let $f_2 : T_2 \rightarrow T_3$ with $f_2 = (\delta_2, N_2)$. Then, using the functoriality of $\pi_1$ and $|\cdot|$, we have $S(f_2 \circ f_1) = S((\delta_2 \circ \Bary^{N_1}(\delta_1), N_1 + N_2)) = |\delta_2 \circ \Bary^{N_1}(\delta_1)|_* = (|\delta_2| \circ |\Bary^{N_1}(\delta_1)|)_* = |\delta_2|_* \circ |\delta_1|_* = S(f_2) \circ S(f_1)$, so $S$ respects composition. Now let $\id_{T_1}$ be the identity morphism of $T_1$. Then $S(\id_{T_1}) = S((\id_{\K_1}, 0)) = |\id_{\K_1}|_* = \id_{\pi_1(\K_1)}$, so $S$ also preserves identity morphisms. $S$ is well-defined since $\pi_1$ cannot distinguish between homotopic functions. We conclude that $S$ is a functor.

\end{proof}

We are almost ready to prove that composition of loop agreement tasks is in fact the categorical product in $\textbf{Loop}$, but first we need a lemma describing the categorical product in $\textbf{SimC}_2$, which is slightly different than the one in $\textbf{SimC}$.

\begin{lem}
\label{catprod2} 

If $\K_1$ and $\K_2$ are objects in $\emph{\textbf{SimC}}_2$, then $\skel^2(\K_1 \times \K_2)$ is their categorical product in $\emph{\textbf{SimC}}_2$.

\end{lem}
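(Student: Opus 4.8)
The plan is to verify the universal property from the definition of categorical product directly, bootstrapping off the known fact (Definition \ref{catprod}, together with the remark that the product of simplicial complexes is a categorical product) that $\K_1 \times \K_2$ with the coordinate projections $\rho_1, \rho_2$ is the categorical product in $\textbf{SimC}$. First I would check that $\skel^2(\K_1 \times \K_2)$ is a legitimate object of $\textbf{SimC}_2$: the $2$-skeleton of any complex is a subcomplex of dimension at most $2$. The projections for $\skel^2(\K_1 \times \K_2)$ will be the restrictions of $\rho_1$ and $\rho_2$ to this subcomplex; since the restriction of a simplicial map to a subcomplex is again simplicial and carries that subcomplex into the target, these restrictions are genuine morphisms $\rho_i : \skel^2(\K_1 \times \K_2) \to \K_i$ in $\textbf{SimC}_2$.

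Next, given any test object $\C \in \Ob(\textbf{SimC}_2)$ with morphisms $f_1 : \C \to \K_1$ and $f_2 : \C \to \K_2$, I would define $f : \C \to \skel^2(\K_1 \times \K_2)$ on vertices by $f(v) = (f_1(v), f_2(v))$. To see this is a well-defined simplicial map, take $\sigma \in \C$; then $\rho_1(f(\sigma)) = f_1(\sigma) \in \K_1$ and $\rho_2(f(\sigma)) = f_2(\sigma) \in \K_2$, so $f(\sigma)$ is a simplex of $\K_1 \times \K_2$ by Definition \ref{catprod}. Moreover $f(\sigma)$ has at most $|\sigma| \le 3$ vertices because $\dim \C \le 2$, hence $f(\sigma) \in \skel^2(\K_1 \times \K_2)$. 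By construction $\rho_i \circ f = f_i$.

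For uniqueness: a simplicial map is determined by its action on vertices, and any $g : \C \to \skel^2(\K_1 \times \K_2)$ with $\rho_i \circ g = f_i$ must satisfy $g(v) = (\rho_1(g(v)), \rho_2(g(v))) = (f_1(v), f_2(v)) = f(v)$, so $g = f$. This establishes the universal property, and since the categorical product is unique when it exists, $\skel^2(\K_1 \times \K_2)$ with the restricted projections is the categorical product in $\textbf{SimC}_2$.

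I do not expect a genuine obstacle here; the only point that is not a transcription of the $\textbf{SimC}$ argument is the observation that the image simplex $f(\sigma)$ has at most three vertices and therefore survives the $2$-truncation — this is exactly where the hypothesis $\dim \C \le 2$ is used, and it is the one step I would make sure to state explicitly rather than a difficulty to overcome.
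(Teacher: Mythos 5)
Your proof is correct and follows essentially the same route as the paper's: restrict the coordinate projections to the $2$-skeleton, define the product morphism vertexwise by $v \mapsto (f_1(v), f_2(v))$, verify it is simplicial via Definition~\ref{catprod} together with the bound $\dim \C \le 2$, and obtain uniqueness because a simplicial map is determined by its values on vertices. The one step you flag explicitly (that $f(\sigma)$ survives the $2$-truncation) is exactly the step the paper also relies on, so there is nothing further to add.
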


\begin{proof}

We first define projection maps $\rho_1 : \skel^2(\K_1 \times \K_2) \rightarrow \K_1$ and $\rho_2 : \skel^2(\K_1 \times \K_2) \rightarrow \K_2$ as $\rho_1(v_1, v_2) = v_1$ and $\rho_2(v_1, v_2) = v_2$. That is, the $\rho_i$ are the restrictions to the $2$-skeleton of the projection maps found in Definition \ref{catprod}, so they are clearly simplicial.

Now suppose we have a $2$-dimensional complex $\K$ with simplicial maps $\delta_1 : \K \rightarrow \K_1$ and $\delta_2 : \K \rightarrow \K_2$. Then we define $\delta : \K \rightarrow \skel^2(\K_1 \times \K_2)$ as $\delta(v) = (\delta_1(v), \delta_2(v))$. This is the only possible set function $\delta$ that makes the diagram commute; that is, $\delta$ is the only set function such that $\delta_1 = \rho_1 \circ \delta$ and $\delta_2 = \rho_2 \circ \delta$. This proves uniqueness, but we must also show that $\delta$ is simplicial.

Let $\sigma$ be a simplex in $\skel^2(\K_1 \times \K_2)$. Then $\delta_1(\sigma)$ and $\delta_2(\sigma)$ are simplexes in $\K_1$ and $\K_2$, respectively. But as we have shown, $\delta_1(\sigma) = \rho_1(\delta(\sigma))$ and $\delta_2(\sigma) = \rho_2(\delta(\sigma))$, so in particular, we see that $\rho_1(\delta(\sigma))$ and $\rho_2(\delta(\sigma))$ are simplexes. Hence by Definition \ref{catprod}, $\delta(\sigma)$ is a simplex in $\K_1 \times \K_2$, and furthermore it is a simplex in $\skel^2(\K_1 \times \K_2)$ since the dimension of $\sigma$ is at most $2$. So $\delta$ is a simplicial map, which proves that $\skel^2(\K_1 \times \K_2)$ is the categorical product of $\K_1$ and $\K_2$ in $\textbf{SimC}_2$.

\end{proof}

Note that Lemma \ref{catprod2} easily generalizes to $\textbf{SimC}_n$ and the $n$-skeleton.

\begin{thm}

Composition of loop agreement tasks is the categorical product in \emph{\textbf{Loop}}.

\end{thm}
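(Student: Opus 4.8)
The plan is to verify the universal property of the categorical product directly for the task $T_1 \times T_2 = \Loop(\skel^2(\K_1 \times \K_2), \lambda_1 \star \lambda_2)$, using Proposition \ref{taskproj} to supply the projections and Theorem \ref{main} (together with Lemma \ref{catprod2}) to supply the mediating morphism. First I would record the projections: Proposition \ref{taskproj} already gives morphisms $\rho_1 : T_1 \times T_2 \rightarrow T_1$ and $\rho_2 : T_1 \times T_2 \rightarrow T_2$ in $\textbf{Loop}$, realized (up to the chosen $N$) by the simplicial projection maps of Lemma \ref{catprod2}. So the content is the factorization: given any loop agreement task $T = \Loop(\K, \lambda)$ with morphisms $f_1 : T \rightarrow T_1$ and $f_2 : T \rightarrow T_2$, I must produce a unique morphism $f : T \rightarrow T_1 \times T_2$ with $\rho_i \circ f = f_i$.

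For existence, a morphism $f_i = (\delta_i, N_i)$ means $T$ implements $T_i$, which by Theorem \ref{original} gives a homomorphism $h_i : \pi_1(\K) \rightarrow \pi_1(\K_i)$ sending $[\lambda]$ to $[\lambda_i]$. Combining these via the universal property of the direct product of groups yields $h = (h_1, h_2) : \pi_1(\K) \rightarrow \pi_1(\K_1) \times \pi_1(\K_2)$ with $h([\lambda]) = ([\lambda_1], [\lambda_2])$. Now I invoke the ``reverse'' direction of Theorem \ref{main}: since $\pi_1(\skel^2(\K_1 \times \K_2)) \cong \pi_1(\K_1 \times \K_2) \cong \pi_1(\K_1) \times \pi_1(\K_2)$ (Fact \ref{only2} and Fact \ref{prodequiv}, plus the fundamental-group-of-a-product fact), and since $[\lambda_1 \star \lambda_2]$ corresponds to $([\lambda_1], [\lambda_2])$ under this isomorphism, a homomorphism $\pi_1(\K) \rightarrow \pi_1(\skel^2(\K_1 \times \K_2))$ carrying $[\lambda]$ to $[\lambda_1 \star \lambda_2]$ is exactly what Theorem \ref{original} needs to conclude that $T$ implements $T_1 \times T_2$; that gives the desired $f = (\delta, N)$. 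One then checks $\rho_i \circ f = f_i$: after applying the signature functor $S$, both sides induce the same homomorphism (namely $h_i$) on fundamental groups, and since morphisms in $\textbf{Loop}$ are identified up to homotopy and $\pi_1$ is faithful enough on this class — more precisely, since $\K_i$ is a connected $2$-complex, Lemma \ref{2dim}-style rigidity shows two decision maps inducing the same map on $\pi_1$ and agreeing on the loop are homotopic — the equality holds in $\textbf{Loop}$.

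For uniqueness, suppose $f' : T \rightarrow T_1 \times T_2$ also satisfies $\rho_i \circ f' = f_i$. Apply $S$: then $S(\rho_i) \circ S(f') = S(f_i) = S(\rho_i) \circ S(f)$ for $i = 1,2$. Since $S(\rho_1), S(\rho_2)$ are precisely the two projections of the direct product $\pi_1(\K_1) \times \pi_1(\K_2)$ (this is where I use that $S$ takes $\rho_i$ to the induced map of the simplicial projection, which is the group-theoretic projection under the identification above), the universal property of the product of groups forces $S(f') = S(f)$. Finally I upgrade this to $f' = f$ in $\textbf{Loop}$: two morphisms between loop agreement tasks with the same signature-functor image are equal in the quotient category, because a decision map into the $2$-complex $\skel^2(\K_1\times\K_2)$ is determined up to homotopy by its effect on $\pi_1$ and on the loop (again Lemma \ref{2dim}).

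The main obstacle I anticipate is the last step in both existence and uniqueness: translating the equality of induced homomorphisms back into equality of morphisms in $\textbf{Loop}$. This requires knowing that $S$ is faithful on the relevant hom-sets — that a decision map into a connected $2$-dimensional complex is determined up to homotopy by the data $(f_*, f|_{\text{loop}})$. This is essentially the injectivity complement to Lemma \ref{2dim} (which only asserts surjectivity of the ``realization'' operation), so I would either extract it from the Herlihy–Rajsbaum machinery in \cite{Loop2} or prove it via obstruction theory on the $2$-skeleton using Fact \ref{cellular} and Fact \ref{HEP}. Everything else — the projections, the group-theoretic universal property, the passage between $\pi_1(\K_1 \times \K_2)$ and $\pi_1(\K_1) \times \pi_1(\K_2)$ — is routine bookkeeping assembled from the facts already established.
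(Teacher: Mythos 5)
Your overall skeleton (projections from Proposition \ref{taskproj}, then existence and uniqueness of the mediating morphism) matches the paper, but the route you take through the algebraic signature has a genuine gap, and it is exactly the one you flag at the end. Your existence check of $\rho_i \circ f = f_i$ and your entire uniqueness argument both rest on the claim that a decision map into a connected $2$-dimensional complex is determined up to homotopy by its induced homomorphism on $\pi_1$ together with its behaviour on the loop --- i.e.\ that $S$ is faithful on the relevant hom-sets. This is false in general: $2$-dimensional complexes need not be aspherical ($\pi_2(\skel^2(\K_1\times\K_2))$ can be nontrivial; think of a triangulated $S^2$ or the $2$-skeleton of $\Delta^n$ for $n\ge 3$), so two maps can induce the same homomorphism on $\pi_1$ and agree on the distinguished loop without being homotopic. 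Lemma \ref{2dim} is purely a realization (surjectivity) statement and has no injectivity complement; the obstruction-theoretic argument you propose would in fact locate a nonvanishing obstruction in $\pi_2$ of the target. So the step ``$S(f')=S(f)$ implies $f'=f$'' cannot be repaired in the form you state it.

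The paper avoids this entirely by never routing through $\pi_1$. For existence it builds the mediating map componentwise: it takes a common subdivision, sets $\delta = (\delta_1, \delta_2')$ landing in $\skel^2(\K_1\times\K_2)$ via Lemma \ref{catprod2}, and then corrects $\delta(\lambda)$ to $\lambda_1\star\lambda_2$ by a homotopy extended with Fact \ref{HEP}; commutativity $\rho_i\circ\delta'\simeq\delta_i$ then holds essentially by construction rather than by comparing induced homomorphisms. For uniqueness it uses the set-level identity $\delta' = (\rho_1\circ\delta', \rho_2\circ\delta')$ and combines the componentwise homotopies $|\rho_i\circ\delta'|\simeq|\delta_i|$ into a homotopy $|\delta'|\simeq(|\delta_1|,|\delta_2|)$ --- that is, it exploits the universal property of the product of complexes, where a map into a product is determined up to homotopy by its components. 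If you restructure your existence and uniqueness steps around this componentwise argument (keeping your correct use of the projections and of Theorem \ref{original}/\ref{main} as motivation), the proof goes through; as written, the faithfulness claim is the load-bearing step and it does not hold.
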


\begin{proof}

Let $T_1 = \Loop(\K_1, \lambda_1)$ and $T_2 = \Loop(\K_2, \lambda_2)$ be tasks as defined before, and let $\Gamma_1$ and $\Gamma_2$ be their specification maps, respectively. Let $\Gamma_\times$ be the specification map of $T_1 \times T_2$. We must first define decision maps from $T_1 \times T_2$ to $T_1$ and $T_2$ that would make $T_1 \times T_2$ the categorical product. We know that $\skel^2(\K_1 \times \K_2)$ is the categorical product of $\K_1$ and $\K_2$ in the category $\textbf{SimC}_2$, and that the product comes with projection maps $\rho_1 : \skel^2(\K_1 \times \K_2) \rightarrow \K_1$ and $\rho_2 : \skel^2(\K_1 \times \K_2) \rightarrow \K_2$. Using these, we define maps $g_1 : T_1 \times T_2 \rightarrow T_1$ and $g_2 : T_1 \times T_2 \rightarrow T_2$ with $g_1 = (\rho_1, 0)$ and $g_2 = (\rho_2, 0)$, and we claim that these maps make $T_1 \times T_2$ the categorical product of $T_1$ and $T_2$.

First, we must show that $g_1$ and $g_2$ are decision maps solving $T_1$ and $T_2$. However, we already showed this in Proposition \ref{taskproj}. To prove that $g_1$ and $g_2$ are the projection maps that make $T_1 \times T_2$ the categorical product, we consider a task $T$ that implements both $T_1$ and $T_2$, say via maps $f_1 = (\delta_1, N_1)$ and $f_2 = (\delta_2, N_2)$, respectively. Let $T = \Loop(\K, \lambda)$ and let $\Gamma$ be its specification map. We must find a decision map that solves $T_1 \times T_2$ from $T$. Without loss of generality, assume $N_1 \ge N_2$, so let $\delta_2' : \Bary^{N_1}(\K) \rightarrow \K_2$ be a simplicial approximation of $\delta_2$. Then $\delta = (\delta_1, \delta_2')$ is a map from $\Bary^{N_1}(\K)$ to $\skel^2(\K_1 \times \K_2)$, though it does not necessarily carry $\lambda$ to $\lambda_1 \star \lambda_2$. Instead, $g = (\delta, N_1)$ is a morphism from $\Loop(\K, \lambda)$ to $\Loop(\skel^2(K_1 \times \K_2), \delta(\lambda))$. However, it is easy to see that $\delta(\lambda)$ is homotopic to $\lambda_1 \star \lambda_2$. Using Fact \ref{HEP}, we can extend this to a homotopy on all of $\skel^2(\K_1 \times \K_2)$, so we obtain a continuous function $h : |\skel^2(\K_1 \times \K_2)| \rightarrow |\skel^2(\K_1 \times \K_2)|$. Let $\gamma : \Bary^M(\skel^2(\K_1 \times \K_2)) \rightarrow \skel^2(\K_1 \times \K_2)$ be a simplicial approximation of $h$. Then notice that $g' = (\gamma, M)$ is a morphism from $\Loop(\skel^2(\K_1 \times \K_2), \delta(\lambda))$ to $\Loop(\skel^2(\K_1 \times \K_2), \lambda_1 \star \lambda_2)$. So $f = g' \circ g$ is a morphism $f : T \rightarrow T_1 \times T_2$. We must also show that $f = (\gamma \circ \Bary^M(\delta), N_1 + M)$ makes the diagram commute. Let $\delta' = \gamma \circ \Bary^M(\delta)$. We know that $\rho_i \circ \delta \simeq \delta_i$ by construction of $\delta$, and it is also clear that $\delta' \simeq \delta$, by construction of $\delta'$ and $\gamma$. It follows that $\rho_i \circ \delta' \simeq \delta_i$, proving that $f$ makes the diagram commute. Thus we have the required product morphism.

Finally, it remains to show that $f$ is unique. Let $f'$ be any such morphism making the diagram commute, and let $\delta'$ be its simplicial map. Then, as set maps, we know that $\delta' = (\rho_1 \circ \delta', \rho_2 \circ \delta')$. However, we are assuming that $|\rho_1 \circ \delta'| \simeq |\delta_1|$ and $|\rho_2 \circ \delta'| \simeq |\delta_2|$, so this allows us to conclude that $|\delta'| = (|\rho_1 \circ \delta'|, |\rho_2 \circ \delta'|) \simeq (|\delta_1|, |\delta_2|)$. Therefore $|\delta'| \simeq (|\delta_1|, |\delta_2|)$, which is homotopic to the map constructed in the existence proof above. So $\delta$ is unique up to homotopy, meaning that $f$ is unique. This proves that $g_1$ and $g_2$ are satisfactory projection maps, proving that $T_1 \times T_2$ is in fact the categorical product of $T_1$ and $T_2$.

\end{proof}

The category $\textbf{pGrp}$ also has products. We define this product, and state without proof that it is indeed the categorical product. This follows immediately from the fact that the direct product of groups is the categorical product in $\textbf{Grp}$ \cite{MacLane}.

\begin{fact}

Let $(G_1, g_1)$ and $(G_2, g_2)$ be objects in $\emph{\textbf{pGrp}}$. Then $(G_1 \times G_2, (g_1, g_2))$ is their categorical product.

\end{fact}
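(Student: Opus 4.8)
The plan is to verify the universal property of the categorical product directly, reducing everything to the known fact that $G_1 \times G_2$ is the categorical product in $\textbf{Grp}$. First I would exhibit the candidate projection morphisms: let $\rho_1 : G_1 \times G_2 \to G_1$ and $\rho_2 : G_1 \times G_2 \to G_2$ be the usual coordinate projections, which are group homomorphisms. Since $\rho_i(g_1, g_2) = g_i$, each $\rho_i$ sends the distinguished element $(g_1, g_2)$ to $g_i$, so $\rho_i : (G_1 \times G_2, (g_1, g_2)) \to (G_i, g_i)$ is a genuine morphism in $\textbf{pGrp}$.

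Next I would check the universal property. Suppose $(H, h)$ is any object of $\textbf{pGrp}$ equipped with morphisms $f_1 : (H, h) \to (G_1, g_1)$ and $f_2 : (H, h) \to (G_2, g_2)$; by definition of $\textbf{pGrp}$ these are group homomorphisms with $f_1(h) = g_1$ and $f_2(h) = g_2$. Because $G_1 \times G_2$ is the categorical product in $\textbf{Grp}$, there is a unique group homomorphism $f : H \to G_1 \times G_2$ with $\rho_1 \circ f = f_1$ and $\rho_2 \circ f = f_2$, namely $f(x) = (f_1(x), f_2(x))$. It then remains only to observe that $f(h) = (f_1(h), f_2(h)) = (g_1, g_2)$, so this $f$ is in fact a morphism $(H, h) \to (G_1 \times G_2, (g_1, g_2))$ in $\textbf{pGrp}$, and it satisfies the required factorization.

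For uniqueness I would argue that any $\textbf{pGrp}$-morphism $f : (H,h) \to (G_1 \times G_2, (g_1, g_2))$ satisfying $\rho_i \circ f = f_i$ is in particular a group homomorphism $H \to G_1 \times G_2$ with $\rho_i \circ f = f_i$; since there is only one such homomorphism by the universal property in $\textbf{Grp}$, uniqueness in $\textbf{pGrp}$ follows. This shows $(G_1 \times G_2, (g_1, g_2))$ together with $\rho_1, \rho_2$ satisfies the defining property of the categorical product.

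I do not expect any real obstacle here: the whole argument is a routine ``lifting'' of the product from $\textbf{Grp}$ along the forgetful functor $\textbf{pGrp} \to \textbf{Grp}$, and the only point requiring attention is the bookkeeping that the canonically defined maps carry distinguished elements to distinguished elements. One could alternatively invoke the general principle that a forgetful functor which creates limits transports products, but for a single binary product the hands-on verification above is the shortest route.
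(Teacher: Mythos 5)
Your proof is correct and follows exactly the route the paper has in mind: the paper states this fact without proof, remarking only that it follows immediately from the direct product being the categorical product in $\textbf{Grp}$, and your argument simply spells out that reduction, with the one nontrivial bookkeeping point (that the projections and the induced map all preserve the distinguished elements) handled correctly.
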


Having defined the categorical products in $\textbf{Loop}$ and $\textbf{pGrp}$, and together with Theorem \ref{main}, the next corollary is a simple consequence.

\begin{cor}

\label{pp}

The functor $S : \emph{\textbf{Loop}} \rightarrow \emph{\textbf{pGrp}}$ preserves products.

\end{cor}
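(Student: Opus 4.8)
To say that $S$ preserves products means concretely: for loop agreement tasks $T_1 = \Loop(\K_1,\lambda_1)$ and $T_2 = \Loop(\K_2,\lambda_2)$, the object $S(T_1 \times T_2)$ together with the morphisms $S(g_1)$ and $S(g_2)$ (where $g_1 = (\rho_1,0)$ and $g_2 = (\rho_2,0)$ are the projections of the previous theorem) is a categorical product of $S(T_1)$ and $S(T_2)$ in $\textbf{pGrp}$. Since that product is $(\pi_1(\K_1)\times\pi_1(\K_2),\,([\lambda_1],[\lambda_2]))$ by the preceding Fact, and categorical products are unique up to isomorphism, it suffices to exhibit an isomorphism $\psi : S(T_1\times T_2) \to S(T_1)\times S(T_2)$ in $\textbf{pGrp}$ with $\mathrm{pr}_i \circ \psi = S(g_i)$. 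Unwinding $S$, this asks for a group isomorphism $\psi : \pi_1(\skel^2(\K_1\times\K_2)) \to \pi_1(\K_1)\times\pi_1(\K_2)$ whose two coordinates are $|\rho_1|_*$ and $|\rho_2|_*$ and which sends $[\lambda_1\star\lambda_2]$ to $([\lambda_1],[\lambda_2])$.

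I would build $\psi$ as a composite of three isomorphisms already available in the excerpt. First, the inclusion $\iota : \skel^2(\K_1\times\K_2) \hookrightarrow \K_1\times\K_2$ induces an isomorphism $\iota_*$ by Fact \ref{only2}, and it fixes $[\lambda_1\star\lambda_2]$ since the diagonal product is an edge loop living already in the $2$-skeleton. Second, the natural comparison map $|\K_1\times\K_2| \to |\K_1|\times|\K_2|$ is a homotopy equivalence (Fact \ref{prodequiv}), hence induces an isomorphism on $\pi_1$; being the map with components the realized projections, it is compatible with $\rho_1,\rho_2$. Third, the product formula $\pi_1(|\K_1|\times|\K_2|)\cong\pi_1(\K_1)\times\pi_1(\K_2)$ is, by construction, $[\gamma]\mapsto([\rho_1\circ\gamma],[\rho_2\circ\gamma])$. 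Composing the three, $\psi$ is a group isomorphism whose $i$-th coordinate is $|\rho_i|_* = S(g_i)$, so $\mathrm{pr}_i\circ\psi = S(g_i)$, and moreover the common basepoint $u_0 = (v_0,w_0)$ projects to the basepoints $v_0,w_0$, so all the $\pi_1$'s are based compatibly.

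It then remains to verify $\psi([\lambda_1\star\lambda_2]) = ([\lambda_1],[\lambda_2])$, which is exactly the computation already made in Proposition \ref{taskproj} and in the proof of Theorem \ref{topalg}: under $\rho_1$ the path $r_{ij}$ of $\lambda_1\star\lambda_2$ traverses $p_{ij}$ and then sits at the constant vertex $v_j$, which is homotopic rel endpoints to $p_{ij}$, so $\rho_1\circ(\lambda_1\star\lambda_2)\simeq\lambda_1$, and symmetrically $\rho_2\circ(\lambda_1\star\lambda_2)\simeq\lambda_2$; hence $|\rho_i|_*[\lambda_1\star\lambda_2] = [\lambda_i]$. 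Thus $\psi$ is a morphism in $\textbf{pGrp}$, and being a group isomorphism that preserves distinguished elements it is an isomorphism in $\textbf{pGrp}$. By uniqueness of categorical products, $S(T_1\times T_2)$ with projections $S(g_1),S(g_2)$ is the product of $S(T_1)$ and $S(T_2)$, so $S$ preserves binary products, and the general finite case follows by the same induction used for Theorem \ref{main}. (Alternatively, one could sidestep the naturality bookkeeping by combining Theorem \ref{main} with Theorem \ref{original}: a morphism $S(T_1\times T_2)\to S(T)$ exists iff $T_1\times T_2$ implements $T$ iff $T_1$ and $T_2$ implement $T$ iff a morphism $S(T_1)\times S(T_2)\to S(T)$ exists, so the two pointed groups corepresent the same functor.)

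The only delicate point, and the one I would be most careful about, is keeping the distinguished element and the basepoints coherent through the chain of isomorphisms — in particular invoking the \emph{natural} comparison map rather than an unspecified homotopy equivalence from Fact \ref{prodequiv}, so that each stage commutes with the projections; once one commits to these standard, projection-compatible isomorphisms, the rest is routine.
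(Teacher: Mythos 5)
Your proof is correct and follows essentially the same route as the paper: both rest on the isomorphism $\pi_1(\skel^2(\K_1\times\K_2)) \cong \pi_1(\K_1)\times\pi_1(\K_2)$ carrying $[\lambda_1\star\lambda_2]$ to $([\lambda_1],[\lambda_2])$, which the paper simply extracts from the proof of Theorem \ref{topalg} while you reconstruct it explicitly from Facts \ref{only2} and \ref{prodequiv}. If anything you are more careful than the paper, whose proof only exhibits an isomorphism of objects in $\textbf{pGrp}$ and never checks that it commutes with the projections $S(g_1), S(g_2)$ --- the compatibility your construction (and your corepresentability alternative) verifies, and which is what ``preserves products'' strictly requires.
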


\begin{proof}

Let $T_1 = \Loop(\K_1, \lambda_1)$ and $T_2 = \Loop(\K_2, \lambda_2)$ be objects in \textbf{Loop}. Then $S(T_1) = (\pi_1(\K_1), [\lambda_1])$ and $S(T_2) = (\pi_1(\K_1), [\lambda_2])$, so $S(T_1) \times S(T_2) = (\pi_1(\K_1) \times \pi_2(\K_2), ([\lambda_1], [\lambda_2]))$. However, from the proof of Theorem \ref{topalg}, we see that $(\pi_1(\K_1) \times \pi_2(\K_2), ([\lambda_1], [\lambda_2])) \cong (\pi_1(\skel^2(\K_1 \times \K_2)), [\lambda_1 \star \lambda_2]) = S(T_1 \times T_2)$, so in fact $S(T_1 \times T_2) \cong S(T_1) \times S(T_2)$. Therefore $S$ preserves products.

\end{proof}

\section{Applications}

In this section we present some simple applications of the correspondence between compositions of loop agreement tasks and the products of their algebraic signatures. 

\begin{prop}

Let $T$ be $(3, 2)$-set agreement, and let $T'$ be any other loop agreement task. Then $T \times T'$ and $T$ are equivalent.

\end{prop}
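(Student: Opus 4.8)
The plan is to use the algebraic characterization via the functor $S$ together with Theorem~\ref{main} (in its $n$-task and categorical-product forms), so that the whole argument reduces to a statement about pointed groups. First I would recall that $(3,2)$-set agreement is, up to equivalence, the loop agreement task whose output complex is the boundary of the $2$-simplex, $\partial\Delta^2$, with $\lambda$ the obvious boundary loop; its fundamental group is $\pi_1(\partial\Delta^2)\cong\Z$, and the distinguished element $[\lambda]$ is a generator, say $1\in\Z$. So $S(T)=(\Z,1)$. For an arbitrary loop agreement task $T'=\Loop(\K',\lambda')$, write $S(T')=(G,g)$ with $g=[\lambda']$.

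By Corollary~\ref{pp}, $S(T\times T')\cong S(T)\times S(T')=(\Z\times G,(1,g))$ in $\textbf{pGrp}$. The key algebraic observation is that the pointed group $(\Z\times G,(1,g))$ is isomorphic to $(\Z,1)$ whenever $g$ lies in... wait — more precisely, the map $\psi:\Z\times G\to\Z$ is not what we want; instead I would exhibit a pointed-group isomorphism $(\Z\times G,(1,g))\cong(\Z,1)$ only in the special case that $G$ is generated by $g$ and $G\cong\Z$ — which need not hold. So the correct route is \emph{not} to claim the signatures are isomorphic, but to invoke Theorem~\ref{main} directly in both directions. To show $T\times T'$ implements $T$: this is immediate from Proposition~\ref{taskproj}. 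To show $T$ implements $T\times T'$: by the $n$-task form of Theorem~\ref{original} (equivalently, Theorem~\ref{main} with the roles reversed, using that a single task $T$ implements a loop task $T''$ iff there is a homomorphism $\pi_1(\K)\to\pi_1(\K'')$ sending $[\lambda]$ to $[\lambda'']$), it suffices to produce a homomorphism $h:\Z\to\Z\times G$ with $h(1)=(1,g)$. Such an $h$ exists and is unique: $h(n)=(n,g^n)$. Hence $T$ implements $T\times T'$.

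So the steps, in order, are: (1) identify $(3,2)$-set agreement with $\Loop(\partial\Delta^2,\lambda)$ and compute $S(T)=(\Z,1)$; (2) quote Proposition~\ref{taskproj} to get that $T\times T'$ implements $T$; (3) apply Theorem~\ref{original} to reduce "$T$ implements $T\times T'$'' to finding a group homomorphism $\pi_1(\partial\Delta^2)\to\pi_1(\skel^2(\K\times\K'))$ carrying the distinguished element correctly, using Corollary~\ref{pp} to identify the target pointed group with $(\Z\times G,(1,g))$; (4) write down $h(n)=(n,g^n)$ and check $h(1)=(1,g)$; (5) conclude equivalence.

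The main obstacle I anticipate is step~(3)/(4): one must be careful that Theorem~\ref{original} applies to the task $T\times T'$ on the nose — it does, since $T\times T'$ is by definition a loop agreement task with $\pi_1(\skel^2(\K\times\K'))\cong\Z\times G$ (by Fact~\ref{only2} and Fact~\ref{prodequiv}) and distinguished element corresponding to $(1,g)$ under this isomorphism — and that the homomorphism $n\mapsto(n,g^n)$ is genuinely well-defined as a group homomorphism $\Z\to\Z\times G$, which it is precisely because $\Z$ is free on one generator, so specifying the image of $1$ suffices. No subdivision subtleties arise because everything has been pushed through the functor $S$; the topological content is entirely absorbed into Theorems~\ref{original} and~\ref{main}.
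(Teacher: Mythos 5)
Your proposal is correct and follows essentially the same route as the paper: identify $(3,2)$-set agreement's signature as $(\Z,1)$, use Corollary~\ref{pp} to get $S(T\times T')=(\Z\times G,(1,g))$, and exhibit pointed-group homomorphisms in both directions to conclude mutual implementation via the algebraic characterization. Your homomorphism $h(n)=(n,g^n)$ is in fact the correct form of the map $\Z\to\Z\times G$ (the paper writes $\psi(n)=(n,g)$, which is not a homomorphism in general and is evidently a typo for your formula), and invoking Proposition~\ref{taskproj} for the direction $T\times T'$ implements $T$ is an equally valid substitute for the paper's explicit projection homomorphism.
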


\begin{proof}

Recall that $(3, 2)$-set agreement is the task $\Loop(\skel^1(\Delta^2), \zeta))$, where $\zeta$ is the triangle loop $(0, 1, 2, ((0, 1)), ((1, 2)), ((2, 0)))$. This triangle loop generates $\pi_1(\skel^1(\Delta^2))$, so $S(T) = (\pi_1(\skel^1(\Delta^2)), [\zeta]) \cong (\Z, 1)$. Let $S(T') = (G, g)$. Then by Corollary \ref{pp}, $S(T \times T') = S(T) \times S(T') = (\Z \times G, (1, g))$. The homomorphism $\phi : \Z \times G \rightarrow \Z$ defined by projection onto the first coordinate sends $(1, g)$ to $1$, and the homomorphism $\psi : \Z \rightarrow \Z \times G$ defined by $\psi(n) = (n, g)$ sends $1$ to $(1, g)$. So $T \times T'$ and $T$ implement one another, so are equivalent.

\end{proof}

Since $(3, 2)$-set agreement was shown to be universal for loop agreement by Herlihy and Rajsbaum \cite{Loop2}, it is operationally intuitive that composing it with any other loop agreement task should not change its relative power.

\begin{prop}

Let $T$ be any simplex agreement task, and let $T'$ be any other loop agreement task. Then $T \times T'$ and $T'$ are equivalent.

\end{prop}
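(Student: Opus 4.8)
The plan is to follow the template of the preceding proposition, substituting the computation $S(T)\cong(\Z,1)$ there with $S(T)\cong(\{e\},e)$, the trivial pointed group. First I would recall that a simplex agreement task has output complex a subdivision of $\Delta^2$, say $T=\Loop(\Bary^N(\Delta^2),\mu)$ for some $N$ and some triangle loop $\mu$. Since $|\Bary^N(\Delta^2)|=|\Delta^2|$ is contractible, $\pi_1(\Bary^N(\Delta^2))$ is the trivial group, so its algebraic signature is $S(T)=(\pi_1(\Bary^N(\Delta^2)),[\mu])=(\{e\},e)$. Writing $T'=\Loop(\K',\lambda')$ and $S(T')=(G,g)=(\pi_1(\K'),[\lambda'])$, Corollary \ref{pp} gives $S(T\times T')=S(T)\times S(T')=(\{e\}\times G,(e,g))$. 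The maps $G\to\{e\}\times G$, $x\mapsto(e,x)$, and the projection $\rho_2:\{e\}\times G\to G$ are mutually inverse isomorphisms of pointed groups, carrying $g$ to $(e,g)$ and back, so $S(T\times T')\cong S(T')$.

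Next I would deduce equivalence from the Herlihy--Rajsbaum characterization (Theorem \ref{original}) applied to the two single loop agreement tasks $T'$ and $T\times T'=\Loop(\skel^2(\Bary^N(\Delta^2)\times\K'),\mu\star\lambda')$. One direction is already Proposition \ref{taskproj}: $T\times T'$ implements $T'$, so nothing new is needed there. For the converse I need a homomorphism $\pi_1(\K')\to\pi_1(\skel^2(\Bary^N(\Delta^2)\times\K'))$ sending $[\lambda']$ to $[\mu\star\lambda']$. Using the isomorphism $\pi_1(\skel^2(\Bary^N(\Delta^2)\times\K'))\cong\pi_1(\Bary^N(\Delta^2))\times\pi_1(\K')=\{e\}\times G$ established in the proof of Theorem \ref{topalg} — which carries the diagonal product loop $[\mu\star\lambda']$ to $(e,g)$ — the required homomorphism is simply $x\mapsto(e,x)$ followed by that identification. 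Hence $T'$ implements $T\times T'$, and the two tasks are equivalent.

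The only point requiring any care — and it is minor — is justifying that the output complex of a simplex agreement task is simply connected; once $\pi_1$ of it is known to be trivial, the argument is a routine application of Corollary \ref{pp} and Theorem \ref{original}, entirely parallel to the $(3,2)$-set agreement case above. A small bookkeeping matter is keeping track of which factor of the product is trivial and matching distinguished elements through the isomorphisms, but no genuine obstacle arises.
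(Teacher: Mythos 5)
Your proof is correct and follows essentially the same route as the paper: compute $S(T)$ as the trivial pointed group, apply Corollary \ref{pp} to identify $S(T\times T')$ with $(\{e\}\times G,(e,g))\cong(G,g)$, and conclude mutual implementation via the signature characterization. You spell out the two directions (via Proposition \ref{taskproj} and an explicit homomorphism) more explicitly than the paper does, and you also correct a small slip in the paper's ordering of the distinguished element, but the argument is the same.
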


\begin{proof}

Since the output complex if $T$ is a subdivided simplex, it has trivial fundamental group, so $S(T) = (1, e)$. As before, let $S(T') = (G, g)$. By Corollary \ref{pp}, $S(T \times T') = S(T) \times S(T') = (1 \times G, (g, e))$, which is clearly isomorphic to $(G, g)$. So $T \times T'$ and $T$ implement one another, so are equivalent.

\end{proof}

Herlihy and Rajsbaum also showed that simplex agreement is implemented from any loop agreement task \cite{Loop2}, so it is also intuitively clear that composing a task with simplex agreement should not change the relative power of the original task.

\begin{prop}

Let $T$ be any loop agreement task. Then $T \times T$ and $T$ are equivalent.

\end{prop}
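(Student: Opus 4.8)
The plan is to mimic the template of the two preceding propositions: pass to algebraic signatures via the functor $S$, use Corollary \ref{pp} to compute the signature of $T \times T$, and then exhibit group homomorphisms in both directions that witness mutual implementation. Write $S(T) = (G, [\lambda])$ with $G = \pi_1(\K)$, and set $g = [\lambda]$. By Corollary \ref{pp}, $S(T \times T) \cong S(T) \times S(T) = (G \times G, (g, g))$, so that $T \times T$ is (up to equivalence) the loop agreement task whose signature is $(G \times G, (g,g))$.

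For the direction ``$T \times T$ implements $T$'', I would invoke Proposition \ref{taskproj} directly (it already says $T_1 \times T_2$ implements $T_1$, so in particular $T \times T$ implements $T$); alternatively, this follows from Theorem \ref{original} since the projection $\rho_1 : G \times G \to G$, $(a,b) \mapsto a$, is a group homomorphism sending $(g,g)$ to $g$. For the direction ``$T$ implements $T \times T$'', I would apply Theorem \ref{original} to the two single tasks $T = \Loop(\K, \lambda)$ and $T \times T = \Loop(\skel^2(\K \times \K), \lambda \star \lambda)$: under the isomorphism $\pi_1(\skel^2(\K \times \K)) \cong G \times G$ from the proof of Theorem \ref{topalg}, the class $[\lambda \star \lambda]$ corresponds to $(g,g)$, so it suffices to produce a homomorphism $G \to G \times G$ carrying $g$ to $(g,g)$. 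The diagonal embedding $\Delta : G \to G \times G$, $a \mapsto (a, a)$, is exactly such a homomorphism. Having implementations in both directions, $T \times T$ and $T$ are equivalent by definition.

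The only real subtlety — and the point worth flagging in the write-up — is that the ``obvious'' candidate for the reverse map, namely the coordinatewise multiplication $G \times G \to G$, is not a group homomorphism unless $G$ is abelian, so one must instead use the diagonal embedding for ``$T$ implements $T \times T$'' and the coordinate projection for ``$T \times T$ implements $T$''. Both of these are homomorphisms for arbitrary $G$, so no abelian hypothesis is needed and the argument goes through for every loop agreement task $T$. Everything else is a routine bookkeeping of which characterization theorem applies in which direction.
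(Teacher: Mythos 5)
Your proposal is correct and follows essentially the same route as the paper's proof: compute $S(T \times T) = (G \times G, (g,g))$ via Corollary \ref{pp}, then use the diagonal homomorphism $x \mapsto (x,x)$ for one direction and a coordinate projection for the other. The extra remarks (invoking Proposition \ref{taskproj} as an alternative, and noting that coordinatewise multiplication would fail for nonabelian $G$) are sound but not needed beyond what the paper already does.
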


\begin{proof}

Let $S(T) = (G, g)$. Then by Corollary \ref{pp}, $S(T \times T) = S(T) \times S(T) = (G \times G, (g, g))$. Letting $\phi : G \rightarrow G \times G$ be the diagonal map $\phi(x) = (x, x)$, $\phi$ maps $g$ to $(g, g)$, and letting $\psi : G \times G \rightarrow G$ be projection onto a coordinate, $\psi$ maps $(g, g)$ to $g$. So $T \times T$ and $T$ are equivalent.

\end{proof}

The above result states that composing a loop agreement task with copies of itself will not change its relative power.

\section{Conclusions}

It is a common technique to study a class of objects by mapping these objects into a class of simpler ones in such a way that preserves enough information about the original class of objects. This was the idea behind the fundamental group from algebraic topology, and was also the idea of the algebraic signature of Herlihy and Rajsbaum in their work on loop agreement. In this work we formalized and further extended the algebraic signature characterization by defining the composition of tasks and relating compositions of tasks to products of groups, and in doing so we partially answered the questions raised in the original paper. How much further can this characterization be extended; what more can we learn from the algebraic signature functor between loop agreement tasks and groups with distinguished elements? Does this functor have an adjunction?

The categorical techniques in this paper can be applied to general tasks. For example, tasks with decision maps form a category \textbf{Task}, with loop agreement as a subcategory. In the case of loop agreement, we are able to extract valuable information about tasks by mapping them into groups. What kind of functors may we apply to general tasks? Also in the case of loop agreement, we were able to identify parallel composition with the category product. Can parallel composition be defined for more general tasks, for instance via $\skel^n(\Oh_1 \times \Oh_2)$, and what is its precise operational meaning of parallel composition for general tasks?

\bibliography{Loop}
\bibliographystyle{plain}

\end{document}